\newcommand{\C}{\mathbb{C}}
\newcommand{\E}{\mathbb{E}}
\newcommand{\ket}[1]{| #1 \rangle}
\newcommand{\bra}[1]{\langle #1|}
\DeclareMathOperator{\poly}{poly}
\DeclareMathOperator{\tr}{tr}
\newcommand{\be}{\begin{equation}}
\newcommand{\ee}{\end{equation}}
\newcommand{\bea}{\begin{eqnarray}}
\newcommand{\eea}{\end{eqnarray}}
\newcommand{\bes}{\begin{equation*}}
\newcommand{\ees}{\end{equation*}}
\newcommand{\beas}{\begin{eqnarray*}}
\newcommand{\eeas}{\end{eqnarray*}}
\def\clap#1{\hbox to 0pt{\hss#1\hss}}
\newtheorem*{rep@theorem}{\rep@title}
\newcommand{\newreptheorem}[2]{%
\newenvironment{rep#1}[1]{%
 \def\rep@title{#2 \ref{##1} (restated)}%
 \begin{rep@theorem}}%
 {\end{rep@theorem}}}
\newtheorem{thm}{Theorem}
\newtheorem*{thm*}{Theorem}
\newtheorem{lem}[thm]{Lemma}
\newtheorem*{lem*}{Lemma}
\newtheorem{prop}[thm]{Proposition}
\newtheorem{defn}[thm]{Definition}
\newtheorem{obs}[thm]{Observation}
\newtheorem{fact}[thm]{Fact}
\begin{document}

\title{Weak multiplicativity for random quantum channels}

\author{Ashley Montanaro\footnote{Centre for Quantum Information and Quantum Foundations, Department of Applied Mathematics and Theoretical Physics, University of Cambridge, UK; {\tt am994@cam.ac.uk}.}}

\maketitle

\begin{abstract}
It is known that random quantum channels exhibit significant violations of multiplicativity of maximum output $p$-norms for any $p>1$. In this work, we show that a weaker variant of multiplicativity nevertheless holds for these channels. For any constant $p>1$, given a random quantum channel $\mathcal{N}$ (i.e.\ a channel whose Stinespring representation corresponds to a random subspace $S$), we show that with high probability the maximum output $p$-norm of $\mathcal{N}^{\otimes n}$ decays exponentially with $n$. The proof is based on relaxing the maximum output $\infty$-norm of $\mathcal{N}$ to the operator norm of the partial transpose of the projector onto $S$, then calculating upper bounds on this quantity using ideas from random matrix theory.
\end{abstract}


\section{Introduction}

For many years, some of the most vexatious open problems of quantum information theory have concerned maximum output $p$-norms of quantum channels. If $\mathcal{N}$ is a quantum channel (i.e.\ completely positive, trace-preserving map), the maximum output $p$-norm of $\mathcal{N}$ is defined as
\[ \|\mathcal{N}\|_{1\rightarrow p} := \max \{ \|\mathcal{N}(\rho)\|_p,\, \rho \ge 0,\, \tr \rho = 1 \}, \]
where $\|X\|_p := (\tr |X|^p)^{1/p}$ is the Schatten $p$-norm. (The notation $\nu_p(\mathcal{N})$ is also used for $\|\mathcal{N}\|_{1\rightarrow p}$. Technically, $\|\mathcal{N}\|_{1\rightarrow p}$ is actually defined as $\sup \{ \|\mathcal{N}(X)\|_p/\|X\|_1\}$, where the supremum is taken over all non-zero Hermitian operators $X$, but it can be shown that this definition is equivalent~\cite{amosov02}.) It was a long-standing conjecture in quantum information theory~\cite{amosov00} that, for any two quantum channels $\mathcal{N}_1$, $\mathcal{N}_2$,
\[ \| \mathcal{N}_1 \otimes \mathcal{N}_2 \|_{1\rightarrow p}  \stackrel{?}{=} \| \mathcal{N}_1\|_{1\rightarrow p}  \| \mathcal{N}_2 \|_{1\rightarrow p}, \]
at least for $p$ fairly close to 1. This is equivalent to the question of additivity of {\em minimum} output R\'enyi $p$-entropies, which are defined in terms of maximum output $p$-norms as
\[ H_p^{\min}(\mathcal{N}) := \frac{1}{1-p} \log \|\mathcal{N}\|_{1\rightarrow p}^p.\]
The minimum output (von Neumann) entropy $H^{\min}(\mathcal{N})$ is obtained by taking the limit $p \rightarrow 1$~\cite{amosov00}. This case of the additivity question was of particular interest due to its connections with many other additivity problems in quantum information theory~\cite{shor04}.

All of these multiplicativity/additivity conjectures are now known to be false. First, Werner and Holevo found a counterexample to multiplicativity for $p>4.79$~\cite{werner02}. Some years later, the conjecture was falsified in the range $p>2$ by Winter~\cite{winter07}, which was swiftly extended to $1<p<2$ by Hayden~\cite{hayden07}. These works were combined as~\cite{hayden08}, which also includes the remaining case $p=2$. One can generalise the conjectures to $p<1$ (where $\|\cdot\|_p$ is of course no longer a norm), and in this setting Cubitt et al~\cite{cubitt08a} falsified additivity of minimum output R\'enyi $p$-entropies for $p \approx 0$. Finally, Hastings showed that the minimum output entropy is not additive~\cite{hastings09}. Following this, Aubrun, Szarek and Werner showed that the results of Hayden, Winter and Hastings can be obtained from Dvoretzky's theorem in the language of asymptotic geometric analysis~\cite{aubrun10,aubrun11}.

As well as the limit $p\rightarrow 1$, another important special case of the multiplicativity question is $p=\infty$, which turns out to be closely related to a number of other quantities studied in quantum information theory, as we now discuss. Any quantum channel performing a map from a $d_A$-dimensional quantum system $A$ to a $d_B$-dimensional quantum system $B$ can be written as $\mathcal{N}(\rho) = \tr_E V \rho V^\dag$ for some isometry $V: \C^{d_A} \rightarrow \C^{d_B} \otimes \C^{d_E}$ (a form known as the Stinespring dilation). The operator $M = V V^\dag$ projects onto a subspace $S \subseteq \C^{d_B} \otimes \C^{d_E}$. For our purposes, we can simply identify $\mathcal{N}$ with either $V$, $S$ or $M$.

Let $\operatorname{SEP} \subset \mathcal{B}(\C^{d_A} \otimes \C^{d_B})$ be the set of $d_A \times d_B$-dimensional separable quantum states. For any operator $M \in \mathcal{B}(\C^{d_A} \otimes \C^{d_B})$ such that $0 \le M \le I$, the quantity
\[ h_{\operatorname{SEP}}(M) := \max_{\rho \in \text{SEP}} \tr M \rho \]
is known as the support function of the separable states, evaluated at $M$. This quantity has the following connection to maximum output $p$-norms:

\begin{fact}
Let $\mathcal{N}$ be a quantum channel with corresponding isometry $V$, and set $M = V V^\dag$. Then
\[ h_{\operatorname{SEP}}(M) = \|\mathcal{N}\|_{1 \rightarrow \infty}. \]
\end{fact}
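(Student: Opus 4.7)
The plan is to rewrite both sides of the claimed identity as the maximum of the same bilinear expression over product pure vectors. For any positive operator $X$ the spectral theorem gives $\|X\|_\infty=\max_{\ket{\phi}}\bracket{\phi}{X}{\phi}$ with $\ket{\phi}$ a unit vector, and since the map $\rho\mapsto\bracket{\phi}{\mathcal{N}(\rho)}{\phi}$ is linear in $\rho$, the outer maximum over density matrices in the definition of $\|\mathcal{N}\|_{1\to\infty}$ is attained on a pure input $\ket{\psi}\bra{\psi}$. Hence
\[
\|\mathcal{N}\|_{1\to\infty}=\max_{\ket{\psi},\ket{\phi}} \bracket{\phi}{\mathcal{N}(\ket{\psi}\bra{\psi})}{\phi}.
\]

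I would then substitute the Stinespring form $\mathcal{N}(\ket{\psi}\bra{\psi})=\tr_E V\ket{\psi}\bra{\psi}V^\dag$, which rewrites the matrix element above as the squared Euclidean norm $\|(\bra{\phi}\otimes I_E)V\ket{\psi}\|^2$. One further application of the variational form of the Euclidean norm introduces a unit vector $\ket{\eta}$ on the environment factor, yielding
\[
\|\mathcal{N}\|_{1\to\infty}=\max_{\ket{\psi},\ket{\phi},\ket{\eta}} |(\bra{\phi}\otimes\bra{\eta})V\ket{\psi}|^2.
\]
Because $V$ is an isometry, as $\ket{\psi}$ ranges over unit input vectors its image $V\ket{\psi}$ ranges over all unit vectors of the subspace $S$, so the right-hand side equals $\max_{\ket{\phi},\ket{\eta}}\max_{\ket{\Psi}\in S,\,\|\Psi\|=1} |(\bra{\phi}\otimes\bra{\eta})\ket{\Psi}|^2$.

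Finally, I would invoke the fact that $M$ is the orthogonal projector onto $S$: for any fixed product vector $\ket{\phi}\otimes\ket{\eta}$ the inner maximum is exactly $\|M(\ket{\phi}\otimes\ket{\eta})\|^2=\bracket{\phi\otimes\eta}{M}{\phi\otimes\eta}$. This equals $\tr(M\rho)$ for $\rho=\ket{\phi}\bra{\phi}\otimes\ket{\eta}\bra{\eta}$, so the outer maximum is the maximum of the linear functional $\rho\mapsto\tr(M\rho)$ over product pure states; linearity extends this to their convex hull $\operatorname{SEP}$, which is $h_{\operatorname{SEP}}(M)$. I do not expect any substantive obstacle here, since each step is either a variational characterisation of an operator norm or a use of the linearity of the trace. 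The only bookkeeping concern is to match the two tensor factors of $M$ (output versus environment of the Stinespring dilation) consistently with the $A|B$ bipartition in the definition of $\operatorname{SEP}$.
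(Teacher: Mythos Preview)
Your proposal is correct. The paper does not actually prove this fact; it only remarks that it ``can easily be proven using the Schmidt decomposition'' and cites external references. Your argument is precisely a fleshed-out version of that hint: the step where you pass from $\bracket{\phi}{\tr_E V\psi V^\dag}{\phi}$ to $\max_\eta |(\bra{\phi}\otimes\bra{\eta})V\ket{\psi}|^2$ is exactly the statement that the largest eigenvalue of the reduced state equals the largest squared Schmidt coefficient of $V\ket{\psi}$, and the remaining identification with $\bracket{\phi\otimes\eta}{M}{\phi\otimes\eta}$ is the standard variational characterisation of the projection onto $S$. Your closing remark about matching the $B|E$ factors of the Stinespring space with the $A|B$ bipartition defining $\operatorname{SEP}$ is the only genuine bookkeeping point, and it is benign here since $h_{\operatorname{SEP}}$ depends only on the bipartite Hilbert space on which $M$ acts.
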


This fact can easily be proven using the Schmidt decomposition, and indeed can be generalised to arbitrary operators $0 \le M \le I$~\cite{matsumoto05} (see~\cite{harrow10} for a proof). The quantity $h_{\operatorname{SEP}}$ is crucially important in the study of multiple-prover quantum Merlin-Arthur games~\cite{kobayashi03,harrow10}, which we now briefly discuss. The complexity class $\mathsf{QMA}(2)$ is informally defined as the class of decision problems which can be solved by a polynomial-time quantum verifier (Arthur) given access to two {\em unentangled} quantum states (or ``proofs'') produced by two all-powerful but potentially malicious provers (Merlin A and Merlin B). Consider an instance of a $\mathsf{QMA}(2)$ problem for which Arthur should output ``no''. If $M$ denotes Arthur's measurement operator corresponding to a ``yes'' outcome, the maximal probability with which the two Merlins can convince Arthur to (incorrectly) output ``yes'' is precisely $h_{\operatorname{SEP}}(M)$. Therefore, projectors $M$ such that $h_{\operatorname{SEP}}(M^{\otimes n}) = h_{\operatorname{SEP}}(M)^n$ correspond to measurement operators occurring in two-prover quantum Merlin-Arthur games which obey {\em perfect parallel repetition}, i.e.\ where Arthur can simply repeat the protocol $n$ times in parallel to reduce a failure probability of $s$ to a failure probability of $s^n$. The failure of multiplicativity for $\|\mathcal{N}\|_{1 \rightarrow \infty}$ implies that such a precise form of parallel repetition cannot hold in general; however, it could still be the case that a weaker form of parallel repetition holds, where $h_{\operatorname{SEP}}(M^{\otimes n})$ necessarily decreases exponentially with $n$.

$h_{\operatorname{SEP}}$ also turns out to have many other connections to important quantities in quantum information theory and tensor optimisation (see \cite{harrow10} for some examples).


\subsection{Counterexamples to multiplicativity}

The known counterexamples to multiplicativity fall into two classes: explicit and randomised. The known explicit counterexamples are a channel of Werner and Holevo~\cite{werner02}, which acts on $d$-dimensional square matrices $\rho$ by the map
\[ \rho \mapsto \frac{1}{d-1} \left((\tr \rho) I - \rho^T \right), \]
and the channel whose corresponding subspace in the Stinespring form is the antisymmetric subspace of $\C^d \otimes \C^d$~\cite{grudka10}. This latter channel violates multiplicativity for all $p>2$, and is of particular interest due both to its simplicity and the fact that it displays a very strong violation of multiplicativity when $p=\infty$. Indeed, if $P_{\operatorname{anti}}$ denotes the projector onto the antisymmetric subspace, it holds that
\[ h_{\operatorname{SEP}}(P_{\operatorname{anti}}) = \frac{1}{2},\text{ but } h_{\operatorname{SEP}}(P_{\operatorname{anti}}^{\otimes 2}) \ge \frac{1}{2}\left(1 - \frac{1}{d} \right). \]
The counterexamples of Hayden and Winter~\cite{winter07,hayden07,hayden08} are random constructions. In particular, the construction used to falsify $p$-norm multiplicativity for all $p>1$ is to choose the first channel $\mathcal{N}$'s corresponding subspace $S \subset \C^d \otimes \C^d$ at random from the set of all subspaces of dimension $r = O(d^{1+1/p})$ (i.e.\ according to Haar measure on the unitary group), and to take $\bar{\mathcal{N}}$ as the second channel.

In the case $p=\infty$, the violation of multiplicativity displayed by this construction is also near-maximal. In other words,
\[ \| \mathcal{N} \otimes \bar{\mathcal{N}} \|_{1\rightarrow \infty} \approx \| \mathcal{N} \|_{1\rightarrow \infty}. \]
(One can show that, for any channels $\mathcal{N}_1$ and $\mathcal{N}_2$, $\| \mathcal{N}_1 \otimes \mathcal{N}_2 \|_{1\rightarrow p} \le \|\mathcal{N}_1\|_{1\rightarrow p}$~\cite{amosov00}.) While the example of the antisymmetric subspace implies that there exists a channel $\mathcal{N}$ such that one can achieve a much larger output $p$-norm by using an entangled state as input to $\mathcal{N}^{\otimes 2}$ than is possible using only product states, it leaves open the question of the general behaviour of $\| \mathcal{N}^{\otimes n} \|_{1\rightarrow p}$ for larger $n$. To the author's knowledge, two extreme situations are still possibilities: on the one hand, it might hold that
\[ \| \mathcal{N}^{\otimes n} \|_{1\rightarrow p} \stackrel{?}{\le} \| \mathcal{N} \|_{1\rightarrow p}^{n/2} \]
for all $\mathcal{N}$; alternatively, there might be no universal constant $\alpha$ such that, for all channels $\mathcal{N}$,
\[ \| \mathcal{N}^{\otimes n} \|_{1\rightarrow p} \le \| \mathcal{N} \|_{1\rightarrow p}^{\alpha n}. \]
The former possibility would imply that the largest possible violation of multiplicativity is quite mild, and in the case $p=\infty$ that a form of parallel repetition holds for two-prover quantum Merlin-Arthur games; the latter would mean that severe violations are possible and parallel repetition fails.

Interestingly, for the antisymmetric subspace it turns out that the former possibility is closer to the truth. In a beautiful and technical recent work, Christandl, Schuch and Winter~\cite{christandl09,christandl10} have shown that there is a constant $C>0$ such that
\[ h_{\operatorname{SEP}}(P_{\operatorname{anti}}^{\otimes n}) \le 2^{-Cn}, \]
implying that a weak variant of multiplicativity does indeed hold for this channel.


\subsection{New results}

The main result of this work is that, even though random quantum channels do not obey multiplicativity, their violations of multiplicativity are in some sense also very weak.

\begin{defn}
A quantum channel $\mathcal{N}$ obeys weak $p$-norm multiplicativity with exponent $\alpha$ if, for all $n \ge 1$,
\[ \| \mathcal{N}^{\otimes n} \|_{1\rightarrow p} \le \| \mathcal{N} \|_{1\rightarrow p}^{\alpha n}. \]
\end{defn}

Observe that, for any $p > 1$,
\[ \| \mathcal{N}^{\otimes n} \|_{1\rightarrow p} \le \| \mathcal{N}^{\otimes n} \|_{1\rightarrow \infty}^{1-1/p}, \]
which follows from the (matrix) H\"older inequality $\|X\|_p^p \le \|X\|_1\|X\|_\infty^{p-1}$ for any $X$. So, if $\mathcal{N}$ obeys weak $\infty$-norm multiplicativity with exponent $\alpha$, $\mathcal{N}$ also obeys weak $p$-norm multiplicativity for any $p>1$, with exponent $\alpha(1-1/p)$.

Our main result can be summarised informally as follows (a more technical version is given as Theorem \ref{thm:toplevel} below).

\begin{thm}
\label{thm:intro}
Let $\mathcal{N}$ be a quantum channel whose corresponding subspace is a random dimension $r$ subspace of $\C^{d_A} \otimes \C^{d_B}$, set $m := \min\{r,d_A,d_B\}$, and assume $m \ge 2 (\log_2 \max\{d_A,d_B\})^{3/2}$ and $r = o(d_A d_B)$. Then the probability that $\mathcal{N}$ does {\em not} obey weak $\infty$-norm multiplicativity with exponent $1/2-o(1)$ (if $r \ge d_B/d_A$), or $1-o(1)$ (if $r \le d_B/d_A$) is exponentially small in $m$.
\end{thm}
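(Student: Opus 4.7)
The plan is to control $\|\mathcal{N}^{\otimes n}\|_{1\to\infty}$ through the PPT relaxation of $h_{\operatorname{SEP}}$. For any projector $M$ and any separable state $\rho$, we have $\tr M\rho = \tr M^\Gamma \rho^\Gamma \le \|M^\Gamma\|_\infty$, since $\rho^\Gamma$ is itself a density matrix. Because the partial transpose (taken over all $B$ factors) satisfies $(M^{\otimes n})^\Gamma = (M^\Gamma)^{\otimes n}$, combining this observation with the fact in the excerpt gives
\[
 \|\mathcal{N}^{\otimes n}\|_{1\to\infty} = h_{\operatorname{SEP}}(M^{\otimes n}) \le \|(M^\Gamma)^{\otimes n}\|_\infty = \|M^\Gamma\|_\infty^n
\]
for the random projector $M = VV^\dag$. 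Weak $\infty$-norm multiplicativity with exponent $\alpha$ therefore reduces to the single-channel inequality $\|M^\Gamma\|_\infty \le h_{\operatorname{SEP}}(M)^\alpha$, and I never have to analyse tensor powers of the random channel directly.

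I would then bound the two sides separately. For a lower bound on $h_{\operatorname{SEP}}(M)$, reshape $V\ket{\alpha}$ as a $d_A\times d_B$ matrix for each unit $\ket{\alpha}\in\C^r$ and apply Chevet-type estimates on the injective norm of the resulting random tensor $V$, in the spirit of Hayden--Winter: the maximum Schmidt coefficient of vectors in $S$ concentrates around its mean, and its square is a high-probability lower bound on $h_{\operatorname{SEP}}(M)$ of the correct order in $r$, $d_A$, $d_B$.

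The technical core of the argument is the upper bound on $\|M^\Gamma\|_\infty$, obtained by the moment method. For $p = \Theta(\log\max\{d_A,d_B\})$, I would compute $\E\tr(M^\Gamma)^{2p}$ using Weingarten calculus for Haar-random isometries (or, equivalently, by modelling $V$ as a rescaled Gaussian matrix and correcting for the isometry constraint). This expectation expands as a sum over pairings on $2p$ points, each weighted by a Weingarten factor together with a power of the dimensions determined by how the partial transpose permutes the leg indices. Isolating the dominant pairings and applying the Markov step $\|M^\Gamma\|_\infty \le (\tr(M^\Gamma)^{2p})^{1/(2p)}$ (which costs only a polylogarithmic factor, given the hypothesis $m\ge 2(\log_2\max\{d_A,d_B\})^{3/2}$) should give $\|M^\Gamma\|_\infty \le h_{\operatorname{SEP}}(M)^\alpha$ with $\alpha = 1-o(1)$ when $r\le d_B/d_A$ and $\alpha = 1/2-o(1)$ when $r\ge d_B/d_A$. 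The regime split arises naturally: in the small-$r$ regime the partial transpose acts essentially trivially on the top of the spectrum, whereas in the large-$r$ regime it inflates the top eigenvalue by a factor of roughly $h_{\operatorname{SEP}}(M)^{-1/2}$.

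Finally, I would upgrade these expectation bounds to the claimed high-probability statement via standard Lipschitz concentration on the Stiefel manifold: both $h_{\operatorname{SEP}}(M)^{1/2}$ and $\|M^\Gamma\|_\infty$ are $O(1)$-Lipschitz in $V$ with respect to the Frobenius norm (for the latter using $\|X^\Gamma\|_2 = \|X\|_2$), so L\'evy-type concentration produces deviations exponentially small in $m$, matching the tail bound in the theorem. The main obstacle is the Weingarten calculation: the partial transpose destroys the unitary invariance that normally simplifies Wishart-type moments, so every pairing---including non-planar ones---must be tracked carefully, and only by separating the dominant contribution from the remainder does the regime-dependent exponent $1/2$ or $1$ actually emerge.
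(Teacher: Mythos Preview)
Your overall architecture matches the paper's: reduce weak $\infty$-norm multiplicativity to the single inequality $\|M^\Gamma\|_\infty \le h_{\operatorname{SEP}}(M)^\alpha$ via the PPT relaxation and multiplicativity of the operator norm, then bound $\E\tr[(M^\Gamma)^k]$ by Weingarten calculus. The paper diverges from your plan in two places, both simplifications rather than alternative ideas. First, the lower bound on $h_{\operatorname{SEP}}(M)$ requires no Chevet-type random-matrix estimate at all: for \emph{any} rank-$r$ projector $M$ one has $h_{\operatorname{SEP}}(M)\ge \max\{r/(d_Ad_B),\,1/d_A\}$ by averaging over product states (first term) and by the trivial fact that every channel output has an eigenvalue $\ge 1/d_A$ (second term). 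This deterministic bound sidesteps any concentration argument on the $h_{\operatorname{SEP}}$ side. Second, the paper does not use L\'evy-type Lipschitz concentration for the tail: it extracts both the mean bound \emph{and} the deviation bound from the same moment estimate, via Markov's inequality applied to $\tr[(M^\Gamma)^k]$ with $k$ pushed up to roughly $(m/2)^{2/3}$ (the largest order at which the Weingarten asymptotics remain valid). Your Lipschitz route is legitimate---$\|M^\Gamma\|_\infty$ is indeed $2$-Lipschitz in $V$ through $\|X^\Gamma\|_2=\|X\|_2$---and would likely give sharper tails, but it decouples the mean and tail analyses, whereas the paper's Markov step keeps the whole proof inside the moment computation. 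The regime split you anticipate (exponent $1/2$ versus $1$ according to whether $r\gtrless d_B/d_A$) does emerge exactly as you describe, via a small linear program on the cycle counts $c(\kappa\pi)$, $c(\kappa^{-1}\pi)$, $c(\pi)$.
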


Note that we have switched notation slightly from $\C^{d_B} \otimes \C^{d_E}$ to $\C^{d_A} \otimes \C^{d_B}$, and will use this notation henceforth. In Theorem \ref{thm:intro}, and throughout the rest of the paper, our notion of a random dimension $r$ subspace $S \subseteq \C^{d_A} \otimes \C^{d_B}$ is as follows: the projector $M$ onto $S$ is formed by starting with the projector $M_0$ onto an arbitrary fixed dimension $r$ subspace $S_0 \subseteq \C^{d_A} \otimes \C^{d_B}$, and taking $M = U M_0 U^{\dag}$, where $U$ is a random unitary operator (i.e.\ $U$ is picked according to the Haar measure on the unitary group $U(d_A d_B)$). We usually think of $r$, $d_A$ and $d_B$ as all growing, but not necessarily at the same rate.

Theorem~\ref{thm:intro} implies that, for random channels $\mathcal{N}$ satisfying some mild dimensionality constraints, with high probability
\[ \| \mathcal{N}^{\otimes n} \|_{1\rightarrow p} \le \| \mathcal{N} \|_{1\rightarrow p}^{(1/2-o(1))(1-1/p) n}, \]
so random channels obey weak $p$-norm multiplicativity with exponent $(1/2-o(1))(1-1/p)$. Note that the results of Hayden and Winter imply that when $r \approx d_A = d_B$, with high probability
\[ \| \mathcal{N} \otimes \bar{\mathcal{N}} \|_{1\rightarrow \infty} \approx \| \mathcal{N} \|_{1\rightarrow \infty}. \]
It was also observed by Hastings~\cite{hastings09} that picking the subspace corresponding to $\mathcal{N}$ by replacing the random unitary $U$ with an operator picked according to Haar measure on the orthogonal group $O(d_A d_B)$  gives a channel such that, when $r \approx d_A = d_B$,
\[ \| \mathcal{N}^{\otimes 2} \|_{1\rightarrow \infty} \approx \| \mathcal{N} \|_{1\rightarrow \infty}. \]
Thus Theorem \ref{thm:intro} is perhaps essentially the strongest multiplicativity result one could expect for the maximum output $\infty$-norm of random quantum channels (although note that it is not in fact known whether multiplicativity is violated by two copies of a random channel $\mathcal{N}$).

We remark that our results also apply without change to the setting where, instead of taking $n$ copies of a fixed random channel $\mathcal{N}$, one takes $n$ independently picked random channels $\mathcal{N}_1,\dots,\mathcal{N}_n$ (where $n$ is fixed) and compares $\|\mathcal{N}_1 \otimes \dots \otimes \mathcal{N}_n\|_{1 \rightarrow p}$ with $(\|\mathcal{N}_1\|_{1\rightarrow p} \dots \|\mathcal{N}_n\|_{1\rightarrow p})^{\alpha}$.


\subsection{The case of the von Neumann entropy}

In certain regimes, our results also imply a weak {\em additivity} result for the von Neumann entropy\footnote{I would like to thank Fernando Brand\~{a}o for pointing this out.}. Recall that for a quantum channel $\mathcal{N}$, the minimum output R\'enyi $\infty$-entropy of $\mathcal{N}$ is defined by taking the limit as $H_{\infty}^{\min}(\mathcal{N}) = -\log \|\mathcal{N}\|_{1\rightarrow \infty}$. A corollary of Theorem \ref{thm:main} below is that there is a universal constant $C$ such that the probability that the regularised minimum output $\infty$-entropy of a random quantum channel $\mathcal{N}$ does {\em not} satisfy
\[
\frac{1}{n} H_{\infty}^{\min}(\mathcal{N}^{\otimes n}) \ge 
\begin{cases}
\frac{1}{2}\left( \log d_A + \log d_B - \log r \right) - C & \text{if } r \ge d_B/d_A\\
\log d_A - C & \text{if } r \le d_B/d_A
\end{cases}
\]
is exponentially small in $m:=\min\{r,d_A,d_B\}$, assuming that $m \ge 2 (\log_2 \max\{d_A,d_B\})^{3/2}$. As $\frac{1}{n} H^{\min}(\mathcal{N}^{\otimes n}) \ge \frac{1}{n} H^{\min}_{\infty}(\mathcal{N}^{\otimes n})$ for all channels $\mathcal{N}$ by monotonicity of R\'enyi entropies, any upper bound on $H^{\min}(\mathcal{N})$ implies a limitation on the extent of possible additivity violations for $H^{\min}(\mathcal{N})$. In particular, as $H^{\min}(\mathcal{N})$ is always upper bounded by $\log d_A$, we obtain a weak additivity result for the minimum output von Neumann entropy when $r$ is not too large with respect to $d_B$. For example, if $r = d_A = d_B$, we obtain that
\[ \frac{1}{n} H^{\min}(\mathcal{N}^{\otimes n}) \ge \frac{1}{2} H^{\min}(\mathcal{N}) - C \]
with high probability. If $r \le d_B/d_A$, we get the stronger result that with high probability
\[ \frac{1}{n} H^{\min}(\mathcal{N}^{\otimes n}) \ge H^{\min}(\mathcal{N}) - C. \]


\subsection{Organisation and proof strategy}

The remainder of this paper is devoted to the proof of Theorem \ref{thm:intro}. Conceptually, the proof is simple: we find a general upper bound on $h_{\operatorname{SEP}}(M)$ (for arbitrary $M$) which is multiplicative, and then show that this upper bound is not too far from the truth when $M$ corresponds to a random quantum channel. The upper bound we use is the operator norm of the partial transpose of $M$, $\|M^\Gamma\|_\infty$, and the main technical contribution of this paper is to prove tail bounds on this quantity, which is carried out using the method of moments from random matrix theory. Note that we have not attempted to optimise the constants and lower order terms which occur in our bounds, which could probably be substantially improved.

The next section discusses the proof strategy and some related work, and states a more formal version of Theorem \ref{thm:intro}. The following section contains the proofs of the main technical results and lemmas. The paper finishes in Section \ref{sec:conclusion} with some conclusions and suggestions for future work.


\section{Maximum overlap with separable states and PPT}
\label{sec:proofstart}

Our proof of Theorem \ref{thm:intro} will be based on a general upper bound strategy for $h_{\text{SEP}}(M)$. Maximising over the set of separable states is a daunting task, and a useful relaxation is to maximise over the larger set of PPT states (bipartite quantum states $\rho$ such that $\rho^{\Gamma} \ge 0$, where $^\Gamma$ denotes the partial transpose operation, i.e.\ the transpose operation performed only on the second subsystem) and consider 
\[ h_{\text{PPT}}(M) := \max_{\rho \in \text{PPT}} \tr M \rho. \]
An upper bound can be put on this quantity as follows.
\begin{prop}
$h_{\operatorname{PPT}}(M) \le \|M^\Gamma\|_\infty$.
\end{prop}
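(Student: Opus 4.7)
The plan is to rewrite the objective $\tr(M\rho)$ using the self-adjointness of the partial transpose with respect to the Hilbert--Schmidt inner product, namely the identity $\tr(AB) = \tr(A^\Gamma B^\Gamma)$, and then bound the resulting expression using a standard inequality about overlaps with density matrices.

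Concretely, I would start from the definition
\[ h_{\text{PPT}}(M) = \max_{\rho \in \text{PPT}} \tr(M \rho), \]
fix any PPT state $\rho$, and observe that $\tr(M\rho) = \tr(M^\Gamma \rho^\Gamma)$. By the PPT assumption we have $\rho^\Gamma \ge 0$, and since the partial transpose preserves trace, $\tr(\rho^\Gamma) = 1$. Hence $\rho^\Gamma$ is itself a density matrix, and applying the elementary bound $\tr(X\sigma) \le \|X\|_\infty$ for any Hermitian $X$ and state $\sigma$ (which follows from writing $\sigma$ as a convex combination of pure states and noting $\bra{\psi} X \ket{\psi} \le \|X\|_\infty$) yields
\[ \tr(M^\Gamma \rho^\Gamma) \le \|M^\Gamma\|_\infty. \]
Taking the maximum over PPT states completes the argument.

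There is essentially no obstacle here: the identity $\tr(AB) = \tr(A^\Gamma B^\Gamma)$ is immediate from writing both operators in a product basis, and the remaining step is a routine application of the variational characterisation of the operator norm. The only thing worth double-checking is that the partial transpose is carried out on the same subsystem in both $M^\Gamma$ and $\rho^\Gamma$, which is indeed what is implicit in the definition of PPT used in the surrounding text.
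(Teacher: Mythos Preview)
Your proof is correct and essentially identical to the paper's: both use the identity $\tr(M\rho)=\tr(M^\Gamma\rho^\Gamma)$, observe that $\rho^\Gamma$ is a density matrix when $\rho$ is PPT, and then bound by $\|M^\Gamma\|_\infty$. The only cosmetic difference is that the paper substitutes $\sigma=\rho^\Gamma$ before applying the bound, whereas you work directly with $\rho^\Gamma$.
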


\begin{proof}
By definition, we have
\[ h_{\text{PPT}}(M) = \max_{\substack{\rho,\;\rho\ge 0,\\\rho^\Gamma\ge 0,\;\tr \rho=1}} \tr M \rho = \max_{\substack{\sigma,\;\sigma\ge 0,\\\sigma^\Gamma\ge 0,\;\tr \sigma^\Gamma=1}} \tr M \sigma^\Gamma, \]
and for any density matrix $\sigma$, $\tr M \sigma^\Gamma = \tr M^\Gamma \sigma \le \|M^\Gamma\|_\infty$.
\end{proof}

This upper bound is often very weak (and for example may be much bigger than 1), but for some operators $M$ it does give something non-trivial. Observe that if $M$ projects onto a subspace $S \subseteq \C^{d_A} \otimes \C^{d_B}$, $\|M^\Gamma\|_\infty$ is {\em not} the same as $\max_{\ket{\psi} \in S} \|\psi^\Gamma\|_\infty$. Indeed, for any state $\ket{\psi}$ we have
\[ \|\psi^\Gamma \|_\infty \le \|\psi^\Gamma\|_2 = \|\psi\|_2 = 1, \]
whereas $\|M^\Gamma\|_\infty$ may be considerably higher.

A key property of $\|M^\Gamma\|_\infty$ which we will use is that it is multiplicative.

\begin{obs}
For any operators $M$, $N$,
$\|(M \otimes N)^\Gamma\|_\infty = \|M^\Gamma \otimes N^\Gamma\|_\infty = \|M^\Gamma\|_\infty \|N^\Gamma\|_\infty$.
\end{obs}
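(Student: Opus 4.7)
The plan is to establish the two equalities in turn, both of which reduce to standard facts about tensor products and partial transposition. Suppose $M$ acts on $\C^{d_{A_1}} \otimes \C^{d_{B_1}}$ and $N$ acts on $\C^{d_{A_2}} \otimes \C^{d_{B_2}}$; then $M \otimes N$ naturally lives on $A_1 B_1 A_2 B_2$, and to read it as a bipartite operator on $(A_1 A_2)(B_1 B_2)$ we conjugate by a swap unitary $W$ exchanging $B_1$ and $A_2$. For the first equality, I would note that the ordinary transpose distributes over tensor products, so partial transposition on the $B_1 B_2$ factor also distributes: $(W(M \otimes N)W^\dag)^\Gamma = M^\Gamma \otimes N^\Gamma$, where on the right the partial transposes are taken on $B_1$ and $B_2$ respectively. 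Since $W$ is unitary and the operator norm is invariant under unitary conjugation, this yields $\|(M \otimes N)^\Gamma\|_\infty = \|M^\Gamma \otimes N^\Gamma\|_\infty$.

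For the second equality, I would appeal to the standard multiplicativity of the operator norm under tensor products, $\|X \otimes Y\|_\infty = \|X\|_\infty \|Y\|_\infty$, which follows from the fact that the singular values of $X \otimes Y$ are exactly the pairwise products of singular values of $X$ and of $Y$ (or, more directly, by maximising $\|(X \otimes Y)(v \otimes w)\|_2$ over unit vectors $v, w$ to realise the product of operator norms, combined with the obvious upper bound). Applied with $X = M^\Gamma$, $Y = N^\Gamma$, this gives $\|M^\Gamma \otimes N^\Gamma\|_\infty = \|M^\Gamma\|_\infty \|N^\Gamma\|_\infty$, completing the proof. There is no genuine obstacle here; the only point demanding attention is careful bookkeeping of which subsystem is being transposed, which is handled by the swap unitary $W$.
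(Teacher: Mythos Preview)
Your argument is correct in substance. The paper itself supplies no proof of this observation---it is stated as self-evident---so there is nothing to compare against; your sketch simply unpacks the standard facts (distributivity of transpose over tensor factors, unitary invariance of $\|\cdot\|_\infty$, and multiplicativity of the operator norm) that justify the one-line claim. One small notational wrinkle: the equality $(W(M\otimes N)W^\dag)^\Gamma = M^\Gamma \otimes N^\Gamma$ as written equates operators living on differently ordered tensor factors; the precise identity is $(W(M\otimes N)W^\dag)^\Gamma = W(M^\Gamma \otimes N^\Gamma)W^\dag$, but since you immediately invoke unitary invariance of the norm this does not affect the conclusion.
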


Thus, if we can show that $\|M^\Gamma\|_\infty \le \delta$ for some $\delta$, we immediately have that $h_{\text{SEP}}(M^{\otimes n}) \le \delta^n$. If $\delta$ is small enough, this can be used to prove that $M$ obeys weak $\infty$-norm multiplicativity. We formalise this as the following observation.

\begin{obs}
For any projector $M$, if $\|M^\Gamma\|_\infty \le h_{\operatorname{SEP}}(M)^{\alpha}$ for some constant $0 < \alpha \le 1$, $M$'s corresponding quantum channel obeys weak $\infty$-norm multiplicativity with exponent $\alpha$.
\end{obs}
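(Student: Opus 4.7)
The statement is essentially a corollary of the three preceding results (the Fact identifying $h_{\text{SEP}}(M)$ with $\|\mathcal{N}\|_{1\to\infty}$, the Proposition bounding $h_{\text{PPT}}$ by $\|M^\Gamma\|_\infty$, and the Observation that $\|M^\Gamma\|_\infty$ is multiplicative under tensor products), so my plan is just to chain them together for the $n$-fold tensor power.

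The starting point is to rewrite the quantity of interest as a support function. If $V$ is the Stinespring isometry of $\mathcal{N}$, then (up to a harmless permutation of tensor factors) $V^{\otimes n}$ is a Stinespring isometry of $\mathcal{N}^{\otimes n}$, and the corresponding projector is $V^{\otimes n}(V^{\otimes n})^\dagger = M^{\otimes n}$. So by the Fact,
\[ \|\mathcal{N}^{\otimes n}\|_{1\to\infty} = h_{\text{SEP}}(M^{\otimes n}). \]

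Next I would relax SEP to PPT (since $\text{SEP} \subseteq \text{PPT}$, giving $h_{\text{SEP}} \le h_{\text{PPT}}$) and apply the Proposition to $M^{\otimes n}$, then use multiplicativity of $\|\cdot^\Gamma\|_\infty$ from the Observation to decouple the $n$ factors:
\[ h_{\text{SEP}}(M^{\otimes n}) \le h_{\text{PPT}}(M^{\otimes n}) \le \|(M^{\otimes n})^\Gamma\|_\infty = \|M^\Gamma\|_\infty^n. \]
Finally, I would plug in the hypothesis $\|M^\Gamma\|_\infty \le h_{\text{SEP}}(M)^\alpha$ and re-invoke the Fact in the other direction to convert $h_{\text{SEP}}(M)$ back into $\|\mathcal{N}\|_{1\to\infty}$, giving $\|M^\Gamma\|_\infty^n \le \|\mathcal{N}\|_{1\to\infty}^{\alpha n}$, which is precisely the weak multiplicativity bound in the definition.

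There is no real obstacle here; the content of the statement is just the packaging of the earlier ingredients into the language of Definition~1. The only thing requiring mild care is the identification of the Stinespring projector of $\mathcal{N}^{\otimes n}$ with $M^{\otimes n}$, which is just a reordering of tensor factors and preserves both $h_{\text{SEP}}$ and the $1\to\infty$ norm. Notice also that the hypothesis $0 < \alpha \le 1$ is only used implicitly to ensure the exponent is meaningful and the bound is of the advertised form; the chain of inequalities itself works for any $\alpha > 0$.
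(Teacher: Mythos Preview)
Your argument is correct and matches the paper's own reasoning exactly: the paper does not give a separate proof of this observation, but the sentence immediately preceding it (``Thus, if we can show that $\|M^\Gamma\|_\infty \le \delta$ for some $\delta$, we immediately have that $h_{\operatorname{SEP}}(M^{\otimes n}) \le \delta^n$'') is precisely the chain $h_{\operatorname{SEP}}(M^{\otimes n}) \le \|(M^{\otimes n})^\Gamma\|_\infty = \|M^\Gamma\|_\infty^n$ that you spell out. Your additional remark about identifying the Stinespring projector of $\mathcal{N}^{\otimes n}$ with $M^{\otimes n}$ up to reordering of tensor factors is the only point the paper leaves implicit, and you handle it correctly.
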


Of course, for many operators $M$, $\|M^\Gamma\|_\infty$ is much larger than 1. However, we will see that for random quantum channels, with high probability this strategy does lead to a non-trivial bound. Let $M$ be the projector onto a subspace picked at random from the set of dimension $r$ subspaces of $\C^{d_A} \otimes \C^{d_B}$, i.e.\ $M$ is formed by taking the projector $M_0$ onto a fixed dimension $r$ subspace $S_0 \subseteq \C^{d_A} \otimes \C^{d_B}$, and conjugating $M_0$ by a Haar-random unitary. Then we have the following easy {\em lower} bound on $h_{\operatorname{SEP}}(M)$.

\begin{prop}
\label{prop:genlower}
Let $M$ be the projector onto an $r$-dimensional subspace of $\C^{d_A} \otimes \C^{d_B}$. Then
\[ h_{\operatorname{SEP}}(M) \ge \max \left\{ \frac{r}{d_A d_B}, \frac{1}{d_A} \right\}. \]
\end{prop}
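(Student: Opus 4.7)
The proposition asserts two separate lower bounds, and they can be established independently by exhibiting two specific separable states.

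For the bound $h_{\operatorname{SEP}}(M) \ge r/(d_A d_B)$, the plan is to plug in the maximally mixed state $\rho = I/(d_A d_B)$, which is manifestly separable (it is a uniform mixture of product states $\ket{i}\ket{j}\bra{i}\bra{j}$, or equivalently a tensor product $(I/d_A) \otimes (I/d_B)$). Then $\tr M\rho = (\tr M)/(d_A d_B) = r/(d_A d_B)$ since $M$ is a rank-$r$ projector. This is a single-line computation.

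For the bound $h_{\operatorname{SEP}}(M) \ge 1/d_A$, I would pick any unit vector $\ket{\psi}$ in the range $S$ of $M$ (this is nonempty since $r \ge 1$) and look at its Schmidt decomposition $\ket{\psi} = \sum_i \sqrt{\lambda_i} \ket{a_i}\ket{b_i}$ with $\lambda_1 \ge \lambda_2 \ge \cdots \ge 0$ and $\sum_i \lambda_i = 1$. The number of nonzero Schmidt coefficients is at most $\min\{d_A, d_B\} \le d_A$, so by averaging $\lambda_1 \ge 1/d_A$. Taking the pure product state $\rho = \ket{a_1}\bra{a_1} \otimes \ket{b_1}\bra{b_1}$, which lies in $\operatorname{SEP}$, we have $M \ge \ket{\psi}\bra{\psi}$, and hence
\[ \tr M\rho \ge |\ip{a_1 b_1}{\psi}|^2 = \lambda_1 \ge \frac{1}{d_A}. \]

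Taking the maximum of the two bounds gives the claim. There is no real obstacle here — both bounds fall out from choosing the right trivial witness state, with the slight observation for the second bound that any pure state in a bipartite space $\C^{d_A} \otimes \C^{d_B}$ must have at least one Schmidt coefficient squared no smaller than $1/\min\{d_A,d_B\}$.
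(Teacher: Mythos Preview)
Your proof is correct. The first bound is essentially identical to the paper's: the paper averages over Haar-random product states, which amounts to testing against $(I/d_A)\otimes(I/d_B)$, exactly the maximally mixed state you use. For the second bound the paper instead invokes the channel correspondence (Fact~1), observing that any output state of the associated channel lives in $\C^{d_A}$ and so has top eigenvalue at least $1/d_A$; your Schmidt-decomposition argument is the direct unpacking of that fact and has the mild advantage of being self-contained (no appeal to Fact~1 needed), while also making explicit that one actually gets the slightly stronger bound $1/\min\{d_A,d_B\}$.
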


\begin{proof}
For the first part, pick a random product state $\ket{\psi} = \ket{\psi_A} \otimes \ket{\psi_B}$ by choosing each of  $\ket{\psi_A}$ and $\ket{\psi_B}$ uniformly at random (according to Haar measure). On average this achieves
\[ \E_{\psi}[\tr M \psi] = \tr M (\E_{\psi_A} \psi_A) \otimes (\E_{\psi_B} \psi_B) = \frac{r}{d_A d_B}. \]
The second part follows from the correspondence with quantum channels. Any state output from the channel which corresponds to $M$ must have largest eigenvalue at least $1/d_A$, so $h_{\operatorname{SEP}}(M) \ge 1/d_A$.
\end{proof}

On the other hand, we will prove the following upper bound.

\begin{thm}
\label{thm:main}
Fix $d_A \le d_B$ and $r$ such that $m := \min\{d_A,r\} \ge 2 (\log_2 d_B)^{3/2}$, and let $M$ be the projector onto a random dimension $r$ subspace of $\C^{d_A} \otimes \C^{d_B}$. Then there exists a universal constant $C$ such that, if $r \ge d_B / d_A$,
\[ \E \|M^{\Gamma}\|_\infty \le \frac{Cr^{1/2}}{ d_A^{1/2} d_B^{1/2}}, \]
and if $r \le d_B / d_A$,
\[ \E \|M^{\Gamma}\|_\infty \le \frac{C}{d_A}. \]
Further, for any $\delta > 0$, there exists a universal constant $C'$ such that if $r \ge d_B / d_A$ then
\[ \Pr\left[\|M^{\Gamma}\|_\infty \ge \delta \frac{2^8 r^{1/2}}{d_A^{1/2} d_B^{1/2}} \right] \le C' m^{16/3} \delta^{-(m/2)^{2/3}}, \]
and if $r \le d_B / d_A$ then
\[ \Pr\left[\|M^{\Gamma}\|_\infty \ge \delta \frac{2^8}{d_A} \right] \le C' m^{16/3} \delta^{-(m/2)^{2/3}}. \]
\end{thm}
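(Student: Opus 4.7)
The plan is to use the method of moments from random matrix theory. Since $M$ is Hermitian, so is $M^\Gamma$, and therefore $\|M^\Gamma\|_\infty^{2k} \le \tr\bigl((M^\Gamma)^{2k}\bigr)$ for every positive integer $k$. Jensen's inequality then yields
\[ \E \|M^\Gamma\|_\infty \le \bigl(\E \tr\bigl((M^\Gamma)^{2k}\bigr)\bigr)^{1/(2k)}, \]
while Markov's inequality gives
\[ \Pr\bigl[\|M^\Gamma\|_\infty \ge t\bigr] \le \E \tr\bigl((M^\Gamma)^{2k}\bigr)\,t^{-2k}. \]
Both conclusions of the theorem therefore reduce to a single task: producing a sharp upper bound on the moment $\E \tr((M^\Gamma)^{2k})$, together with a choice of $k$ that balances the two regimes.

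The next step is to rewrite the moment combinatorially. A direct index manipulation, regrouping the $A$- and $B$-indices produced by the partial transpose, gives
\[ \tr\bigl((M^\Gamma)^{2k}\bigr) = \tr\bigl(M^{\otimes 2k}\,(P^A_{\pi}\otimes P^B_{\pi^{-1}})\bigr), \]
where $\pi = (1\,2\,\cdots\,2k)$ is the cyclic shift and $P^A_\pi$, $P^B_{\pi^{-1}}$ denote the associated permutation operators acting on the $2k$ copies of $\C^{d_A}$ and $\C^{d_B}$; the opposite orientations on the $A$- and $B$-sides are precisely what encodes the ``flip'' produced by the partial transpose. Writing $M = U M_0 U^\dagger$ with $U$ Haar-random and $M_0$ a fixed rank-$r$ projector, using Schur--Weyl duality to express $\E[M^{\otimes 2k}]$ as a combination of permutation operators, and noting that $\tr(P_\sigma M_0^{\otimes 2k}) = r^{c(\sigma)}$ (with $c(\sigma)$ the number of cycles of $\sigma$), one obtains
\[ \E \tr\bigl((M^\Gamma)^{2k}\bigr) = \sum_{\sigma,\tau \in S_{2k}} \mathrm{Wg}(\sigma\tau^{-1}, d_A d_B)\, r^{c(\sigma)}\, d_A^{c(\tau\pi)}\, d_B^{c(\tau\pi^{-1})}, \]
where $\mathrm{Wg}(\cdot,N)$ is the Weingarten function. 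This is exactly the kind of sum that appears in genus expansions of unitary matrix integrals.

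The main technical obstacle will be to bound this double sum sharply. The strategy is to invoke the standard asymptotic $|\mathrm{Wg}(\rho,N)| \lesssim N^{-(2k+|\rho|)}$, where $|\rho| = 2k - c(\rho)$ is the length of $\rho$ in the Cayley graph of $S_{2k}$, and to separate the ``planar'' pairs $(\sigma,\tau)$ that saturate the geodesic inequality on $S_{2k}$ (a Catalan-counted, non-crossing family of $O(4^k)$ terms) from higher-genus pairs suppressed by additional powers of $1/(d_A d_B)$. A careful accounting should show that the planar contribution evaluates to roughly $d_A d_B \cdot (r/(d_A d_B))^k$ in the regime $r \ge d_B/d_A$ and to $d_A d_B \cdot d_A^{-2k}$ in the regime $r \le d_B/d_A$, so that taking the $(2k)$-th root recovers the expected asymptotics $(r/(d_A d_B))^{1/2}$ and $1/d_A$ respectively, up to a constant from the Catalan weights.

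The higher-genus corrections carry combinatorial overcounts that grow polynomially in $k$, and controlling them requires the Weingarten expansion to remain in its asymptotic regime; this is what ultimately forces $k$ to be at most of order $m^{2/3}$, and the condition $m \ge 2(\log_2 d_B)^{3/2}$ is precisely what ensures that the subleading terms can still be absorbed at $k = (m/2)^{2/3}/2$. Substituting this value of $k$ into Markov's inequality with threshold proportional to $\sqrt{r/(d_A d_B)}$ (or to $1/d_A$ in the second regime) then produces the tail bound with exponent $(m/2)^{2/3}$ on $\delta^{-1}$, and the polynomial overhead from counting the permutation sum appears as the $m^{16/3}$ prefactor.
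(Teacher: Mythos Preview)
Your proposal is correct and follows essentially the same route as the paper: the moment method, the expression of $\E\tr((M^\Gamma)^k)$ via the Weingarten calculus as a sum over permutations weighted by $r^{c(\sigma)}d_A^{c(\tau\kappa)}d_B^{c(\tau\kappa^{-1})}$, a genus-type bound on the Weingarten function, and the choice $k\approx (m/2)^{2/3}$ followed by Markov. The paper's version of your ``careful accounting'' first absorbs the Weingarten sum into coefficients $|\alpha_\pi|\lesssim r^{c(\pi)}/(d_Ad_B)^k$ and then bounds $\sum_\pi d_A^{c(\kappa\pi)}d_B^{c(\kappa^{-1}\pi)}r^{c(\pi)}$ via an explicit cycle-counting lemma (using a recursion of Adrianov for the off-geodesic terms) together with a small linear program, and it is the latter that produces the $r\gtrless d_B/d_A$ dichotomy you quoted.
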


Combining Proposition~\ref{prop:genlower} and Theorem \ref{thm:main}, we obtain Theorem \ref{thm:intro}, which we now state more formally. 

\begin{thm}
\label{thm:toplevel}
Fix $d_A \le d_B$ and $r$ such that $m := \min\{d_A,r\} \ge 2 (\log d_B)^{3/2}$, and let $M$ be the projector onto a random dimension $r$ subspace of $\C^{d_A} \otimes \C^{d_B}$. Then there exists a universal constant $C$ such that if $r \ge d_B/d_A$, then
\[ \Pr\left[ \|M^{\Gamma}\|_\infty \ge h_{\operatorname{SEP}}(M)^{1/2 - \epsilon} \right] \le C m^{16/3} 2^{-(m/2)^{2/3}}, \]
where $\epsilon = \frac{9}{\log_2(d_A d_B/r)}$, and if $r \le d_B/d_A$, then
\[ \Pr\left[ \|M^{\Gamma}\|_\infty \ge h_{\operatorname{SEP}}(M)^{1 - \epsilon'} \right] \le C m^{16/3} 2^{-(m/2)^{2/3}}, \]
where $\epsilon' = \frac{9}{\log_2 d_A}$.
\end{thm}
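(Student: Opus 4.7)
My plan is to derive Theorem~\ref{thm:toplevel} as a straightforward corollary of Theorem~\ref{thm:main} combined with the lower bound in Proposition~\ref{prop:genlower}. Both ingredients are already in hand, so the work reduces to choosing the free parameter $\delta$ in Theorem~\ref{thm:main} so that the resulting tail bound on $\|M^\Gamma\|_\infty$ can be phrased as a power of the generic lower bound on $h_{\operatorname{SEP}}(M)$, and checking that the dimensional constraint $m \ge 2(\log_2 d_B)^{3/2}$ validates this choice.

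In the regime $r \ge d_B/d_A$, Proposition~\ref{prop:genlower} gives $h_{\operatorname{SEP}}(M) \ge r/(d_A d_B)$. Since $0 \le h_{\operatorname{SEP}}(M) \le 1$ and $1/2-\epsilon > 0$, I have $h_{\operatorname{SEP}}(M)^{1/2-\epsilon} \ge (r/(d_A d_B))^{1/2-\epsilon}$ deterministically, so it suffices to bound the probability that $\|M^\Gamma\|_\infty$ exceeds this latter quantity. Theorem~\ref{thm:main} tells me that with failure probability at most $C' m^{16/3} \delta^{-(m/2)^{2/3}}$,
\[ \|M^\Gamma\|_\infty \le \delta \cdot 2^8 \left(\tfrac{r}{d_A d_B}\right)^{1/2}. \]
To turn the right-hand side into $(r/(d_A d_B))^{1/2-\epsilon}$ I need $\delta \cdot 2^8 \le (d_A d_B/r)^{\epsilon}$. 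Setting $\delta = 2$ and $\epsilon = 9/\log_2(d_A d_B / r)$ makes this hold with equality, and simultaneously turns the failure probability into $C' m^{16/3} 2^{-(m/2)^{2/3}}$, which is the desired form (absorbing the constant $C'$ into the $C$ of the theorem). The symmetric case $r \le d_B/d_A$ proceeds identically: I use the bound $h_{\operatorname{SEP}}(M) \ge 1/d_A$ from Proposition~\ref{prop:genlower} together with the second tail bound of Theorem~\ref{thm:main}, requiring $\delta \cdot 2^8 \le d_A^{\epsilon'}$, and again choosing $\delta = 2$ yields $\epsilon' = 9/\log_2 d_A$.

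The only minor subtlety I would flag is verifying that the chosen $\epsilon$ is genuinely positive and less than $1/2$ (respectively that $\epsilon' < 1$) so that the exponents $1/2 - \epsilon$ and $1-\epsilon'$ are strictly positive, which is needed to deduce $h_{\operatorname{SEP}}(M)^{1/2-\epsilon} \ge (r/(d_A d_B))^{1/2-\epsilon}$ from the lower bound on $h_{\operatorname{SEP}}(M)$; this is where the hypothesis $m \ge 2(\log_2 d_B)^{3/2}$ (and the implicit assumption that $d_A, d_B, d_A d_B/r$ are all sufficiently large) enters, since it forces $\log_2(d_A d_B/r) > 18$ and $\log_2 d_A > 9$ in any interesting regime. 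Beyond this bookkeeping there is no real obstacle: the heavy lifting is entirely inside Theorem~\ref{thm:main}, and the present statement is obtained simply by inserting the generic multiplicative lower bound on $h_{\operatorname{SEP}}(M)$ into that tail bound and rearranging.
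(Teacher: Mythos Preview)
Your proposal is correct and matches the paper's own proof essentially line for line: the paper also invokes Proposition~\ref{prop:genlower} to replace $h_{\operatorname{SEP}}(M)$ by the deterministic lower bound, sets $\delta=2$ in Theorem~\ref{thm:main}, and observes that $(r/(d_Ad_B))^{1/2-\epsilon}=2^9(r/(d_Ad_B))^{1/2}$ precisely when $\epsilon=9/\log_2(d_Ad_B/r)$. Your additional remark about needing $1/2-\epsilon>0$ (resp.\ $1-\epsilon'>0$) for the monotonicity step is a valid observation that the paper glosses over.
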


\begin{proof}
If $r \ge d_B / d_A$, using Proposition \ref{prop:genlower} and taking $\delta=2$ in Theorem \ref{thm:main}, we obtain
\beas
\Pr[ \|M^{\Gamma}\|_\infty \ge h_{\operatorname{SEP}}(M)^{1/2 - \epsilon}] &\le& \Pr\left[ \|M^{\Gamma}\|_\infty \ge \left(\frac{r}{d_A d_B}\right)^{1/2 - \epsilon}\right]\\
&=& \Pr\left[ \|M^{\Gamma}\|_\infty \ge \frac{2^9 r^{1/2}}{d_A^{1/2} d_B^{1/2}}\right]\\
&\le& C' m^{16/3} 2^{-(m/2)^{2/3}},
\eeas
and a similar argument holds for $r \le d_B / d_A$.
\end{proof}


\subsection{Related work}

Important intuition that some version of Theorem \ref{thm:main} should hold comes from recent work by Aubrun~\cite{aubrun10b}, who studied the asymptotic spectrum of partially transposed Wishart matrices (also see~\cite{collins11} for some interesting related work). Such matrices, which are a natural model for random mixed quantum states, can be formed as follows. Let $G$ be a $d \times r$ matrix whose entries are picked independently from the complex normal distribution $N(0,1)$, and set $W = \frac{1}{d} G G^\dag$; we say that $W$ is a $(d,r)$-Wishart matrix\footnote{Note that we use a different normalisation to~\cite{aubrun10b}.}. Then one of Aubrun's results can be stated as follows. For a fixed constant $0<\alpha<1$, let $W_d$ be a $(d^2,\lfloor \alpha d^2\rfloor)$-Wishart matrix, understood as acting on the bipartite space $\C^d \otimes \C^d$, and set $Y_d = W_d^\Gamma$. Then
\[ \lim_{d\rightarrow \infty} \Pr[|\lambda_{\max}(Y_d) - \sqrt{\alpha}(2 + \sqrt{\alpha})| > \epsilon] = 0. \]
As $r = \alpha d^2$, this implies that with high probability, $\lambda_{\max}(Y_d) = O(\sqrt{r}/d)$ for large $d$. As the columns of $G$ are approximately orthogonal for large $d$, one might expect that the operator norm of the partial transpose of the projector onto a random $r$-dimensional subspace of $\C^d \otimes \C^d$ should behave similarly to that of $Y_d$; the present work demonstrates that this is indeed true.

As well as the final bound we obtain being of a similar order to that of Aubrun, the reader familiar with~\cite{aubrun10b} will notice that we need to prove some analogous combinatorial lemmas. However, it does not seem clear that the results given here could be obtained as a formal consequence of~\cite{aubrun10b}, or indeed vice versa; it would be interesting to determine whether this is the case. We also note two minor technical differences between this work and~\cite{aubrun10b}: Aubrun's result is only stated for $d_A = d_B = d$ (some recent work by Banica and Nechita removes this restriction~\cite{banica11}), and only for $\alpha$ constant (so $r$ grows as a constant fraction of $d$). By contrast, here $d_A$, $d_B$ and $r$ can be essentially arbitrary, although the bound we obtain becomes trivial if $r$ is too large as a fraction of $d_A d_B$.

Some very recent work by Collins, Fukuda and Nechita~\cite{collins11b} also uses related techniques to those which we use here to prove Theorem \ref{thm:main} (e.g.\ calculations with Weingarten functions~\cite{collins06}, see Section \ref{sec:weingarten} below). The goal of~\cite{collins11b} was to find the state which, when input to a tensor product of two random quantum channels (either the same channel, or a channel and its conjugate), achieves minimal output entropy. A sequence of recent papers by Collins and Nechita~\cite{collins10b,collins10,collins11d,collins11c} carries out a number of interesting quantum information-theoretic calculations using Weingarten functions. This method seems to be a powerful tool which may be expected to find many other applications in quantum information.

The rest of this paper is devoted to the proof of Theorem \ref{thm:main}, which we now begin.


\section{Proof of Theorem \ref{thm:main}: moments of partially transposed projectors}

As before, let $M_0$ be the projector onto an arbitrary fixed $r$-dimensional subspace of $\C^{d_A} \otimes \C^{d_B}$, where $d_A \le d_B$, and let $M = U M_0 U^\dag$ be the result of applying a Haar-random unitary operator to $M_0$. Also let
\[ D_d(\pi) := \sum_{i_1,\dots,i_k=1}^d \ket{i_{\pi(1)}}\ket{i_{\pi(2)}}\dots\ket{i_{\pi(k)}} \bra{i_1}\dots\bra{i_k} \]
be the representation of the permutation $\pi \in S_k$ which acts on $k$ systems of dimension $d$ by permuting the systems; in the case $d = d_A d_B$, we simply omit $d$ and write $D(\pi) := D_{d_A d_B}(\pi)$. Let $\kappa\in S_k$ be the permutation that maps $i \mapsto i+1$ for $i=1,\dots,k-1$, and maps $k \mapsto 1$. For any permutation $\pi$, let $c(\pi)$ be the number of cycles in $\pi$.

In order to put good upper bounds on $\|M^{\Gamma}\|_\infty$, it suffices to understand $\E \tr[(M^{\Gamma})^k]$ for arbitrary even $k$. Observe that
\beas \E \tr[(M^{\Gamma})^k] &=& \E_U \tr[((U M_0 U^{\dag})^{\Gamma})^k]\\
&=& \tr[\E_U D(\kappa)((U^{\otimes k} M_0^{\otimes k} (U^{\dag})^{\otimes k})^{\Gamma})]\\
&=& \tr[D(\kappa)^{\Gamma} M^{(k)}],
\eeas
where for brevity we write
\[ M^{(k)} := \E_U[U^{\otimes k} M_0^{\otimes k} (U^{\dag})^{\otimes k}]. \]
The second equality above is the observation that $\tr[X^k] = \tr D(\kappa) X^{\otimes k}$ (indeed, this holds for any permutation $\pi \in S_k$ such that $c(\pi)=1$).

Theorem \ref{thm:main} will follow easily from the following result.

\begin{thm}
\label{thm:kbound}
There is a universal constant $C$ such that, for any $k$ satisfying $2 k^{3/2} \le \min\{d_A,r\}$,
\[ \tr[D(\kappa)^{\Gamma} M^{(k)}] \le \begin{cases}
C k^8 2^{6k} r^{k/2} d_A^{-k/2+1} d_B^{-k/2+1} & \text{ if } r \ge d_B/d_A\\
C k^8 2^{6k}  d_A^{-k+1} d_B & \text{ otherwise.}
\end{cases}
\]
\end{thm}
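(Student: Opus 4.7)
The plan is to apply Weingarten calculus to $M^{(k)} = \E_U[U^{\otimes k} M_0^{\otimes k} (U^\dag)^{\otimes k}]$. The standard identity for the Haar integral on $U(d_Ad_B)$ gives
\[ M^{(k)} = \sum_{\pi,\sigma \in S_k} \mathrm{Wg}(\pi\sigma^{-1}, d_Ad_B)\, \tr[D(\sigma) M_0^{\otimes k}]\, D(\pi), \]
and since $M_0$ is a rank-$r$ projector satisfying $M_0^j = M_0$ for every $j\ge 1$, the cycle-trace formula yields $\tr[D(\sigma) M_0^{\otimes k}] = r^{c(\sigma)}$. Next I need $\tr[D(\kappa)^{\Gamma} D(\pi)]$. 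Splitting each copy of $\C^{d_Ad_B}$ as $\C^{d_A}\otimes\C^{d_B}$, one has the factorization $D(\pi) = D_{d_A}(\pi)\otimes D_{d_B}(\pi)$, and the partial transpose on the $B$ factors turns $D_{d_B}(\pi)$ into $D_{d_B}(\pi^{-1})$. This gives $\tr[D(\kappa)^\Gamma D(\pi)] = d_A^{c(\kappa\pi)} d_B^{c(\kappa\pi^{-1})}$. Substituting $\tau := \pi\sigma^{-1}$ reduces the quantity of interest to the purely combinatorial sum
\[ \tr[D(\kappa)^\Gamma M^{(k)}] \;=\; \sum_{\pi,\tau \in S_k} \mathrm{Wg}(\tau, d_Ad_B)\, r^{c(\tau^{-1}\pi)}\, d_A^{c(\kappa\pi)}\, d_B^{c(\kappa\pi^{-1})}. \]

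Second, I would import the standard Collins--Matsumoto bound $|\mathrm{Wg}(\tau,d)| \le d^{-(2k-c(\tau))}\cdot W_k(\tau)$, where $W_k(\tau)$ is a correction polynomial in $k$ for each fixed $\tau$. The hypothesis $2k^{3/2} \le \min\{d_A,r\}$ is exactly the slack needed for the leading $1/d$ term of the Weingarten expansion to dominate; this step is routine and is responsible for the $k^8$ prefactor in the final bound. After absorbing the Weingarten weight, the remaining task is to estimate
\[ \sum_{\pi,\tau \in S_k} (d_Ad_B)^{c(\tau)-2k}\, r^{c(\tau^{-1}\pi)}\, d_A^{c(\kappa\pi)}\, d_B^{c(\kappa\pi^{-1})}. \]

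Third, I would analyze this sum via the Cayley-graph metric $|\sigma|:= k - c(\sigma)$ on $S_k$. Two facts are crucial: the triangle inequality $|\tau^{-1}\pi| \le |\tau| + |\pi|$, and the specific property of the single $k$-cycle $\kappa$ that the pair $(c(\kappa\pi), c(\kappa\pi^{-1}))$ is constrained by a genus-type relation, yielding $c(\kappa\pi) + c(\kappa\pi^{-1}) \le k+1$. Optimising over $\pi$ reveals two extremal regimes: the dominant contribution comes from either $\pi$ close to $\kappa^{-1}$ (so $c(\kappa\pi)$ is maximal) or $\pi$ close to $\kappa$ (so $c(\kappa\pi^{-1})$ is maximal), and the choice between these regimes is governed precisely by whether $r \ge d_B/d_A$ or not. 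This trade-off is the source of the case distinction in the statement.

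The main obstacle will be the final combinatorial enumeration: bounding, for each value of the cycle statistics, the number of permutations $\pi \in S_k$ that simultaneously realize prescribed values of both $c(\kappa\pi)$ and $c(\kappa\pi^{-1})$. This is the step that genuinely mirrors the annular non-crossing partition analysis underlying Aubrun's partially transposed Wishart calculation in~\cite{aubrun10b}, but adapted to the projector-on-subspace model rather than Wishart matrices. The crude $2^{6k}$ factor should emerge naturally from bounding the number of such $\pi$ by a product of Catalan-like counts; tightening this factor further does not seem worthwhile given the goal is an exponentially decaying tail, and indeed the $k$-polynomial and exponential losses here are absorbed in the statement's implicit constants when $k$ is eventually optimised to derive Theorem~\ref{thm:main} from Theorem~\ref{thm:kbound}.
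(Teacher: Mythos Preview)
Your outline is essentially the paper's own argument: expand $M^{(k)}$ via Weingarten calculus, use $\tr[D(\sigma)M_0^{\otimes k}]=r^{c(\sigma)}$ and $\tr[D(\kappa)^\Gamma D(\pi)]=d_A^{c(\kappa\pi)}d_B^{c(\kappa^{-1}\pi)}$, bound the Weingarten function by $O(C_{k-1})\,d^{c(\tau)-2k}$, and then control the resulting combinatorial sum through the Cayley-metric triangle inequalities. The paper organizes this by first collapsing the $\tau$-sum into a bound $|\alpha_\pi|\le Ck\,2^{4k} r^{c(\pi)}/(d_Ad_B)^k$ (Lemma~\ref{lem:alphabound}) and then optimising the single remaining sum over the three cycle counts $(c(\kappa\pi),c(\kappa^{-1}\pi),c(\pi))$ as a small linear program, whose dual yields exactly the two regimes $r\gtrless d_B/d_A$; your two-regime heuristic is the same LP in disguise.

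One small correction: the genus-type constraint is $c(\kappa\pi)+c(\kappa\pi^{-1})\le k+2$, not $k+1$ (take $\pi=\kappa^{-1}$ with $k$ even), and the $k^8$ factor does not come solely from the Weingarten bound but accumulates from that bound, the $O(k^3)$ triples $(a,b,c)$, and the $k^{O(1)}$ overcount in Lemma~\ref{lem:cycgenub}.
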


\begin{proof}[Proof of Theorem \ref{thm:main} (assuming Theorem \ref{thm:kbound})]
For any even $k \ge 2$,
\[ \E \|M^{\Gamma}\|_\infty \le (\E[\tr (M^{\Gamma})^k])^{1/k}, \]
and taking $k$ to be the largest even number smaller than $(m/2)^{2/3}$, where $m := \min\{d_A,r\}$ and we assume that $m \ge 2 (\log_2 d_B)^{3/2}$, we obtain that there is a universal constant $C'$ such that
\[ \E \|M^{\Gamma}\|_\infty \le \frac{C' r^{1/2}}{d_A^{1/2} d_B^{1/2}} \text{ if } r \ge d_B/d_A, \]
and
\[
\E \|M^{\Gamma}\|_\infty \le \frac{C'}{d_A} \text{  otherwise,}
\]
which is the first part of Theorem \ref{thm:main}. (Note that we took $m \ge 2 (\log_2 d_B)^{3/2}$ in order to kill off terms of the form $d_B^{1/k}$.) The concentration bound follows from Markov's inequality, which implies that, for any even $k \ge 2$ and any $x > 0$,
\[ \Pr[\|M^{\Gamma}\|_\infty \ge x] = \Pr\left[\|M^{\Gamma}\|_\infty^k \ge x^k\right] \le \frac{\E[\tr (M^{\Gamma})^k]}{x^k}; \]
and once again taking $k$ to be the largest even number smaller than $(m/2)^{2/3}$, we obtain
\[ \Pr\left[\|M^{\Gamma}\|_\infty \ge \delta \frac{2^8 r^{1/2}}{d_A^{1/2} d_B^{1/2}} \right] \le \frac{C k^8 d_A d_B }{(4\delta)^k} \le C' m^{16/3} \delta^{-(m/2)^{2/3}} \]
if $r \ge d_B/d_A$, and similarly
\[ \Pr\left[\|M^{\Gamma}\|_\infty \ge \delta \frac{2^8}{d_A} \right] \le \frac{C k^8 d_A d_B}{(4\delta)^k} \le C' m^{16/3} \delta^{-(m/2)^{2/3}} \]
if $r \le d_B/d_A$.
\end{proof}

We now proceed to prove Theorem \ref{thm:kbound}. By Schur-Weyl duality, as $M^{(k)}$ commutes with local unitaries and permutations of the $k$ systems, it can be expanded in terms of permutations $D(\pi)$ as
\[ M^{(k)} = \sum_{\pi \in S_k} \alpha_{\pi} D(\pi) \]
for some coefficients $\alpha_\pi$. When $k$ is small with respect to $d_A d_B$, the matrices $\{D(\pi)\}$ are almost orthonormal with respect to the normalised Hilbert-Schmidt inner product, i.e.\
\[ \frac{1}{(d_A d_B)^k} \tr[D(\pi)^\dag D(\sigma)] \approx 0 \text{ if } \pi \neq \sigma. \]
It has been pointed out by Harrow that one can prove a number of interesting  results in quantum information based on this philosophy~\cite{harrow12}. In our case, because of this near-orthonormality we ought to have
\[ \alpha_{\pi} \approx \frac{\tr [M^{(k)} D(\pi^{-1})]}{\tr [D(\pi^{-1})D(\pi)]} = \frac{r^{c(\pi)}}{(d_A d_B)^k}; \]
the following key technical lemma, which we prove in Section \ref{sec:weingarten}, makes this approximate equality quantitative.

\begin{lem}
\label{lem:alphabound}
Assume $k \le (r/2)^{2/3}$. Then there is a universal constant $C$ such that
\[ |\alpha_\pi| \le C k 2^{4k} \frac{r^{c(\pi)}}{(d_A d_B)^k}. \]
\end{lem}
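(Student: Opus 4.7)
\emph{Proof proposal.} The plan is to compute $\alpha_\pi$ explicitly via the Weingarten calculus~\cite{collins06} for Haar integrals over $U(D)$ (where $D := d_A d_B$) and then carefully bound the resulting sum. Applying the Weingarten formula to $M^{(k)} = \E_U[U^{\otimes k} M_0^{\otimes k}(U^\dag)^{\otimes k}]$ gives
\[
M^{(k)} = \sum_{\sigma, \tau \in S_k} \mathrm{Wg}(\sigma\tau^{-1}, D)\, \tr[D(\tau^{-1}) M_0^{\otimes k}]\, D(\sigma).
\]
Since $M_0$ is a projector onto an $r$-dimensional subspace, a standard cycle-counting computation (identical in spirit to the one used earlier to express $\tr[X^k]$ via $D(\kappa)$) yields $\tr[D(\tau^{-1}) M_0^{\otimes k}] = r^{c(\tau)}$. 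Matching coefficients with $M^{(k)} = \sum_\pi \alpha_\pi D(\pi)$ and substituting $\nu = \sigma\tau^{-1}$ produces the key identity
\[
\alpha_\pi = \sum_{\nu \in S_k} \mathrm{Wg}(\nu, D)\, r^{c(\nu^{-1}\pi)}.
\]

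The leading term $\nu = e$ gives $\mathrm{Wg}(e, D)\, r^{c(\pi)} = (1 + O(k^2/D^2))\, r^{c(\pi)}/D^k$, already matching the claimed bound up to constants. To control the $\nu \neq e$ terms, I would invoke the quantitative Collins--Matsumoto bound
\[
|\mathrm{Wg}(\nu, D)| \le \frac{\prod_i C_{|\gamma_i| - 1}}{(D - k + c(\nu))_{2k - c(\nu)}},
\]
where the $\gamma_i$ are the cycles of $\nu$, $C_n$ is the $n$-th Catalan number, and $(x)_n$ denotes the falling factorial. Using $C_n \le 4^n$ gives $\prod_i C_{|\gamma_i|-1} \le 4^{k - c(\nu)}$, and the hypothesis $k \le (r/2)^{2/3}$ (together with $r \le D$) keeps the falling factorial within a constant of $D^{2k - c(\nu)}$. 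Combined with the elementary inequality $c(\nu^{-1}\pi) \le c(\pi) + (k - c(\nu))$ (each transposition in a minimal word for $\nu$ shifts the cycle count by exactly one), each subleading term is bounded by $(4r/D)^{k - c(\nu)}\cdot r^{c(\pi)}/D^k$.

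Summing over $\nu \in S_k$ using the cycle-index identity $\sum_{\nu \in S_k} x^{k - c(\nu)} = \prod_{i=0}^{k-1}(1 + ix)$ reduces the problem to estimating this product at $x = 4r/D$. The hypothesis $k \le (r/2)^{2/3}$ is calibrated so that, in concert with the Weingarten error terms, this product yields the $2^{O(k)}$ estimate needed to produce the $2^{4k}$ factor, while the prefactor $Ck$ absorbs the remaining lower-order corrections. The main obstacle I foresee is precisely this combinatorial accounting, especially in the regime where $r$ is not negligible compared to $D$: the Catalan-based bound on $|\mathrm{Wg}|$ is loose uniformly across $S_k$, and obtaining the clean constant $2^{4k}$ likely requires splitting the sum by $|\nu| = k - c(\nu)$, using sharper asymptotic Weingarten bounds for small $|\nu|$ (where the leading term dominates), and relying on the Catalan-type bound only for large $|\nu|$ (where the $r^{c(\nu^{-1}\pi)}$ factor decays fast enough to compensate). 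An alternative implementation, possibly cleaner, works directly from the character expansion $\mathrm{Wg}(\nu, D) = \frac{1}{k!^2}\sum_{\lambda \vdash k}(f^\lambda)^2 \chi^\lambda(\nu)/s_\lambda(1^D)$, trading Catalan combinatorics for bounds on characters of $S_k$ and Schur polynomial dimensions.
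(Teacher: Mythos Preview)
Your setup is correct and essentially identical to the paper's: one arrives (via the Weingarten formula, or equivalently by inverting the Gram matrix of the $D(\pi)$'s) at
\[
\alpha_\pi \;=\; \sum_{\sigma \in S_k} \operatorname{Wg}(\pi^{-1}\sigma,D)\,r^{c(\sigma)},
\]
and your Collins--Matsumoto bound on $|\operatorname{Wg}|$ is at least as good as the bound $|\operatorname{Wg}(\nu)| \le \tfrac{3}{2}C_{k-1} D^{c(\nu)-2k}$ that the paper actually uses (Lemma~\ref{lem:wgupper}). The gap is in the next step.

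When you replace $r^{c(\nu^{-1}\pi)}$ by $r^{c(\pi)+k-c(\nu)}$ using only the triangle inequality and then sum via the cycle-index identity, you get
\[
\sum_{\nu \in S_k} \Bigl(\frac{4r}{D}\Bigr)^{k-c(\nu)} \;=\; \prod_{i=0}^{k-1}\Bigl(1+\frac{4ir}{D}\Bigr).
\]
This is $2^{O(k)}$ only when $r \lesssim D/k$. The lemma, however, must hold for all $r \le D$ subject only to $k \le (r/2)^{2/3}$; in particular it must cover $r = \Theta(D)$, where the product above is of order $(2k-1)!!$, not $2^{O(k)}$. Your proposed remedies (sharper Weingarten asymptotics for small $|\nu|$, or the character expansion) do not touch this: the Weingarten bound is not the weak link, the bound $c(\nu^{-1}\pi) \le c(\pi) + k - c(\nu)$ is. That inequality is saturated only by a Catalan-number's worth of $\nu$, and for the vast majority of permutations it is very far from tight; throwing this slack away is exactly what makes the sum blow up to $k!$ size.

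What the paper does instead is stratify the sum $\sum_\sigma D^{c(\pi^{-1}\sigma)} r^{c(\sigma)}$ jointly over both exponents $i = c(\pi^{-1}\sigma)$ and $j = c(\sigma)$, and then invoke the \emph{second} part of Lemma~\ref{lem:cycgenub}: for each deficit $\delta = k + c(\pi) - (i+j) \ge 0$, the number of $\sigma$ with those cycle counts is at most $4^{k-1} k^{3\delta/2+1}$. This trades the factorial multiplicity for a $4^k \cdot \mathrm{poly}(k)$ one, at the cost of a factor $k^{3\delta/2}$ which is then beaten by the $D^{-\delta}$ or $r^{-\delta}$ decay in the summand (using $k^{3/2} \le r/2 \le D$). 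Running this double sum gives $|\alpha_\pi| \le 3\,C_{k-1}\,k^2\,4^{k-1} r^{c(\pi)}/D^k$, from which Lemma~\ref{lem:alphabound} follows since $C_{k-1} \le 4^{k-1}/k$. The missing idea, in short, is a quantitative ``how many permutations are $\delta$-close to geodesic'' estimate, not a refinement of the Weingarten function.
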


Now it holds that
\bea
\nonumber \tr[D(\kappa)^{\Gamma} M^{(k)}] &=& \sum_{\pi \in S_k} \alpha_{\pi} \tr[D(\kappa)^{\Gamma} D(\pi) ] = \sum_{\pi \in S_k} d_A^{c(\kappa \pi)} d_B^{c(\kappa^{-1} \pi)} \alpha_{\pi}\\
\label{eq:sumbound} &\le& C k 2^{4k} \sum_{\pi \in S_k} d_A^{c(\kappa \pi) - k} d_B^{c(\kappa^{-1} \pi) - k} r^{c(\pi)},
\eea
where the second equality is the fact that
\beas
\tr[D(\kappa)^{\Gamma} D(\pi)] &=& \tr[(D_{d_A}(\kappa) \otimes D_{d_B}(\kappa)^T) (D_{d_A}(\pi) \otimes D_{d_B}(\pi))]\\
&=& \tr[D_{d_A}(\kappa)D_{d_A}(\pi)] \tr[D_{d_B}(\kappa^{-1})D_{d_B}(\pi)]\\
&=& d_A^{c(\kappa \pi)} d_B^{c(\kappa^{-1}\pi)},
\eeas
and the inequality is Lemma \ref{lem:alphabound}. In order to upper bound the quantity (\ref{eq:sumbound}), we will use the following lemma\footnote{Related combinatorial results appear in the literature, for example in~\cite{aubrun10b}. However, the precise statement we need here does not seem to have been written down.}, which we also prove afterwards, in Section \ref{sec:comb}.

\begin{lem}
\label{lem:cycgenub}
For any $\pi,\sigma \in S_k$,
\[ c(\pi^{-1}\sigma) + c(\sigma) \le k + c(\pi). \]
Further, for any $\pi \in S_k$ and any integer $\delta \ge 0$, there are at most $4^{k-1} k^{3\delta/2 + 1}$ permutations $\sigma \in S_k$ such that
\[ c(\pi^{-1}\sigma) + c(\sigma) = k + c(\pi) - \delta. \]
\end{lem}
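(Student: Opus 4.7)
First, I would interpret $d(\pi_1,\pi_2):=k-c(\pi_1^{-1}\pi_2)$ as the word metric on $S_k$ with respect to the generating set of transpositions (acting by right multiplication): $d(\pi_1,\pi_2)$ equals the minimum number of transpositions needed to transform $\pi_1$ into $\pi_2$, and in particular $d$ is a metric. Applying the triangle inequality $d(\pi,e)\le d(\pi,\sigma)+d(\sigma,e)$ and rearranging immediately gives the first inequality $c(\pi^{-1}\sigma)+c(\sigma)\le k+c(\pi)$.

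For the second inequality I would reparametrize: for each $\sigma$, set $\alpha:=\sigma$ and $\beta:=\sigma^{-1}\pi$, so that $\alpha\beta=\pi$ and $c(\alpha)+c(\beta)=k+c(\pi)-\delta$ (using $c(\pi^{-1}\sigma)=c(\sigma^{-1}\pi)=c(\beta)$). Since each transposition changes the cycle count by $\pm 1$, $\delta$ must be even and the bound is vacuous for odd $\delta$; write $\delta=2g$. I would then count the pairs $(\alpha,\beta)$ with $\alpha\beta=\pi$ and ``genus $g$'' by induction on $g$. For the base case $g=0$ these are the minimal, i.e.\ non-crossing, factorizations of $\pi$: inside each cycle of $\pi$ of length $\lambda$, the restricted pair forms a minimal factorization and is counted by the Catalan number $C_\lambda\le 4^\lambda$, so the total is at most $\prod_i 4^{\lambda_i}=4^k\le 4^{k-1}k$.

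The inductive step is the heart of the argument. I would show that every genus-$g$ factorization is obtained from some genus-$(g-1)$ factorization by a local ``surgery'' specified by a bounded amount of combinatorial data, each piece of which carries at most $k^{O(1)}$ choices. In the easier case the surgery is just right-multiplication of $\sigma$ by a transposition $(i,j)$ with $i,j$ in a common cycle of both $\sigma$ and $\pi^{-1}\sigma$, which reduces defect by $2$ at a cost of $\binom{k}{2}$ choices; but the example $k=4$, $\pi=(12)(34)$, $\sigma=(13)(24)$ shows that this simpler surgery is not always available (defect $2$ but no defect-$0$ neighbour under a single transposition). The slightly loose factor $k^{3g+1}$ in the statement is exactly what absorbs the extra bookkeeping for the harder case.

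The main obstacle is therefore proving and organising the surgery lemma. The defect formula $\delta=2(c(\pi)-C+\sum_i g_i)$, which I would derive by applying Euler's formula to the bipartite ribbon graph (``map'') with $c(\alpha)$ black vertices, $c(\beta)$ white vertices, $k$ edges and $c(\pi)$ faces, split into $C$ connected components of genera $g_i$, shows that positive defect comes from either positive-genus components or multiple faces per component. Handle-reduction and face-merging must then be handled by separate surgeries, and one must verify that the total cost per unit of defect is $O(k^{3/2})$ so that the combined bound is $4^{k-1}k^{3\delta/2+1}$.
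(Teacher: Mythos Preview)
Your treatment of the first inequality (Cayley metric plus triangle inequality) and the parity observation ($\delta$ even) matches the paper exactly, and your base case $\delta=0$ via Catalan numbers on each cycle is essentially the same as the paper's Lemma~\ref{lem:critical} (the paper gets the slightly sharper $\prod_i C_{\lambda_i}\le C_k\le 4^{k-1}$, but this is cosmetic).

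The genuine gap is the inductive step. You correctly identify that the ``one-transposition surgery'' does not always reduce the defect (your example $\pi=(12)(34)$, $\sigma=(13)(24)$ is exactly the obstruction), and you then propose to handle the general case via handle-reduction and face-merging surgeries on the associated bipartite map, each costing $O(k^{3/2})$ choices. But this surgery lemma is never actually proved: you state what \emph{should} happen without exhibiting the surgeries, showing they are well-defined, or bounding their multiplicity. Proving that every positive-genus or multi-face component admits such a surgery with the claimed cost is nontrivial, and as written the argument is a plan rather than a proof.

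The paper sidesteps the surgery entirely by a two-step manoeuvre. First, it reduces from arbitrary $\pi$ to a single $k$-cycle: choosing a $k$-cycle $\kappa'$ with $\pi$ on the geodesic from $\kappa'$ to $e$, the triangle inequality gives $\sum_{h\le g}N_h(\pi)\le\sum_{h\le g}N_h(\kappa')$, so it suffices to bound $B_g(k):=N_g(\kappa)$. Second, for a single cycle the numbers $B_g(k)$ count bicoloured unicellular maps, for which Adrianov established the explicit recurrence
\[
(k+1)B_g(k)=2(2k-1)B_g(k-1)+(k-2)(k-1)^2 B_{g-1}(k-2),
\]
and a straightforward induction on $g$ using this recurrence gives $B_g(k)\le 4^{k-1}k^{3g}$. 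The induction on $g$ is thus supplied by a known enumeration result rather than a bespoke surgery argument. If you want to pursue your direct approach, the paper's reduction to a single cycle is worth borrowing: it removes the ``multiple faces per component'' source of defect that forced you to consider two kinds of surgery, leaving only the handle-reduction step in the unicellular setting.
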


The exact constants in this lemma are not so important; what matters is that, when $i+j \approx k + c(\pi)$, there are not too many permutations $\sigma$ such that $c(\pi^{-1}\sigma) = i$, $c(\sigma) = j$ (i.e.\ there are $O(4^k \poly(k))$ of them, rather than $\Omega(k!)$). Now define
\[ N(a,b,c) := |\{\pi \in S_k:c(\kappa \pi)=a, c(\kappa^{-1}\pi) = b, c(\pi) = c\}| \]
and rewrite (\ref{eq:sumbound}) as
\[ \tr[D(\kappa)^{\Gamma} M^{(k)}] \le \frac{C k 2^{4k} }{d_A^k d_B^k} \sum_{a,b,c \in \{1,\dots,k\}} N(a,b,c) d_A^{a} d_B^{b} r^{c}. \]
By Lemma \ref{lem:cycgenub}, we have $N(a,b,c)=0$ whenever
\[ a + b > k + 2,\; a + c > k+1,\;\text{or } b + c > k+1. \]
Call the triple $(a,b,c)$ {\em valid} when none of these events occur. Further, we have the bound from Lemma \ref{lem:cycgenub} that for all valid triples,
\[ N(a,b,c) \le 4^{k-1} k^{(3/2)(k + 2 - \max\{a+b,a+c,b+c\} )+1}. \]
We therefore obtain the upper bound
\[ \tr[D(\kappa)^{\Gamma} M^{(k)}] \le \frac{C k^4 2^{4k} }{d_A^k d_B^k} \max_{(a,b,c) \text{ valid}} \left\{ 4^k k^{(3/2)(k - \max\{a+b,a+c,b+c\} )+4} d_A^{a} d_B^{b} r^{c} \right\}. \]
Now observe that, as long as $k^{3/2} \le \min\{d_A, r\}$, this maximum will be achieved when at least one of the validity inequalities is saturated, because otherwise we could increase the maximum by at least a factor of $\min\{d_A,r\}/k^{3/2}$, by increasing at least one of $(a,b,c)$. Thus the upper bound simplifies to
\be \label{eq:validbound} \tr[D(\kappa)^{\Gamma} M^{(k)}] \le \frac{C k^8 2^{6k} }{d_A^k d_B^k} \max_{(a,b,c) \text{ valid}} \left\{ d_A^{a} d_B^{b} r^{c} \right\}. \ee
We have seen that the triple $(a,b,c)$ satisfies the linear inequalities $2 \le a + b \le k+2$, $2 \le a + c \le k+1$, $2 \le b + c \le k+1$. Therefore, the optimal value of the following simple linear program gives an upper bound on $\log \max_{(a,b,c) \text{ valid}} d_A^{a} d_B^{b} r^{c}$:
\beas
\text{maximise} && (\log d_A) a + (\log d_B) b + (\log r) c\\
\text{subject to} && 2 \le a + b \le k+2\\
&& 2 \le a + c \le k+1\\
&& 2 \le b + c \le k+1,\\
&& a,b,c \ge 0.
\eeas
Performing standard manipulations gives the dual:
\beas
\text{minimise} && (k+1)u^+ + (k+1) v^+ + (k+2) w^+ -2 (u^- + v^- + w^-)\\
\text{subject to} && u + w \ge \log d_A \\
 && v + w \ge \log d_B \\
  && u + v \ge \log r,
\eeas
where the notation $w^+$ (resp.\ $w^-$) is used for the positive (resp.\ negative) part of $x$, i.e.\ $w^+ = \max \{ w,0 \}$, $w^- = -\min \{ w,0 \}$. Observe that the following dual solution saturates all three inequalities:
\[ u = \frac{1}{2}\left(\log d_A - \log d_B + \log r \right),\; v = \frac{1}{2}\left(-\log d_A + \log d_B + \log r \right),\; w = \frac{1}{2}\left(\log d_A + \log d_B - \log r \right). \]
As $r \le d_A d_B$ and $d_A \le d_B$,  in this solution $v$ and $w$ are always non-negative. If it additionally holds that $r \ge d_B / d_A$, so $\log d_B \ge \log d_A + \log r$, $u$ is also non-negative, so we achieve an objective value of
\[ (k+1) u + (k+1) v + (k+2)w = \frac{1}{2}\left( (k+2) \log d_A + (k+2) \log d_B + k \log r \right). \]
On the other hand, if $r \le d_B/d_A$, consider the solution
\[ u = 0,\; v = \log d_B - \log d_A,\; w = \log d_A. \]
It is easy to verify that $u,v,w \ge 0$ and this solution achieves an objective value of
\[ \log d_A + (k+1)\log d_B. \]
These two solutions correspond to upper bounds in (\ref{eq:validbound}) of
\[ \tr[D(\kappa)^{\Gamma} M^{(k)}] \le \begin{cases}
C k^8 2^{6k} r^{k/2} d_A^{-k/2+1} d_B^{-k/2+1} & \text{ if } r \ge d_B/d_A\\
C k^8 2^{6k}  d_A^{-k+1} d_B & \text{ otherwise.}
\end{cases}
\]
This completes the proof of Theorem \ref{thm:kbound}.


\subsection{Proof of Lemma \ref{lem:cycgenub}: combinatorics of permutations}
\label{sec:comb}

Our next task is to prove the above combinatorial lemma, which we restate for convenience.

\begin{replem}{lem:cycgenub}
For any $\pi,\sigma \in S_k$,
\[ c(\pi^{-1}\sigma) + c(\sigma) \le k + c(\pi). \]
Further, for any $\pi \in S_k$ and any integer $\delta \ge 0$, there are at most $4^{k-1} k^{3\delta/2 + 1}$ permutations $\sigma \in S_k$ such that
\[ c(\pi^{-1}\sigma) + c(\sigma) = k + c(\pi) - \delta. \]
\end{replem}

The proof will rely on some previously known combinatorial results regarding permutations; we first review some basic ideas in this area. Consider the Cayley graph whose vertices are elements of $S_k$ and where two vertices $\pi$, $\sigma$ are connected if and only if there exists a transposition $\tau$ such that $\tau \pi = \sigma$. For any permutations $\pi$, $\sigma$, let $d(\pi,\sigma)$ be the shortest-path metric with respect to this graph (i.e.\ the minimum number of transpositions required to change $\pi$ into $\sigma$). This is indeed a metric as $d(\pi,\sigma) \le d(\pi,\rho) + d(\rho,\sigma)$ for any $\rho \in S_k$. Also observe that $d(\pi,\sigma) = d(\sigma,\pi)$ and $d(\pi,\sigma) = d(\pi^{-1} \sigma,e)$, where $e$ denotes the identity permutation.

Let $\tau$ be a transposition exchanging elements $i$ and $j$. If $\pi$ is given in cycle notation as $\pi = (c_1)(c_2)\dots(c_ \ell)$, where each $c_p$ is a sequence of integers, then the permutation $\tau \pi$ takes one of two forms depending on whether the transposed elements are in the same cycle or not. Let $c=(c_1,\dots,c_m)$ and $d=(d_1,\dots,d_{m'})$ be two cycles in $\pi$. If elements $c_i$ and $c_j$ are transposed, $c$ splits into two cycles $(c_1,\dots,c_{i-1},c_j,\dots,c_m)$ and $(c_i,\dots,c_{j-1})$. If elements $c_i$ and $d_j$ are transposed, cycles $c$ and $d$ are joined to produce the cycle $(c_1,\dots,c_{i-1},d_j,d_{j+1},\dots,d_{m'},d_1,\dots,d_{j-1},c_i,\dots,c_m)$. Thus performing a transposition $\tau$ always either increases or decreases the number of cycles by exactly 1, and in each case the result preserves the ordering of elements within the original cycles.

This implies that, for any $\pi \in S_k$, $d(\pi,e) = k - c(\pi)$, because $\ell-1$ transpositions are required to split a cycle of length $\ell$ into $\ell$ cycles of length 1, and this can be achieved.  This immediately implies the first part of Lemma \ref{lem:cycgenub} (which is well-known). Simply use the triangle inequality:
\be \label{eq:triangleineq} c(\pi^{-1}\sigma) + c(\sigma) = k - d(\pi,\sigma) + k - d(e,\sigma) \le 2k - d(e,\pi) = k + c(\pi). \ee
Also note that, if $(\tau_1,\dots,\tau_m)$ is a sequence of transpositions such that
\[ \tau_m \dots \tau_1 \pi = \sigma, \]
it holds that $m - d(\pi,\sigma)$ is a multiple of 2.

We now turn to the second part of Lemma \ref{lem:cycgenub}. We first observe that the special case $\delta=0$ has a simple (and well-known, e.g.\ see~\cite{nica06}) proof. 

\begin{lem}
\label{lem:critical}
For any permutation $\pi \in S_k$,
\[ |\{\sigma: c(\pi^{-1} \sigma) + c(\sigma) = k + c(\pi)\}| \le C_k, \]
where $C_k$ is the $k$'th Catalan number
\[ C_k := \frac{1}{k+1} \binom{2k}{k}. \]
\end{lem}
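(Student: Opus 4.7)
The plan is to interpret the condition geometrically: by the computation $(\ref{eq:triangleineq})$, the equality $c(\pi^{-1}\sigma) + c(\sigma) = k + c(\pi)$ is exactly the saturation case of the triangle inequality $d(e,\sigma) + d(\sigma,\pi) \ge d(e,\pi)$ in the transposition Cayley graph of $S_k$. So the lemma amounts to counting the elements $\sigma$ lying on some geodesic from $e$ to $\pi$ (the ``interval'' $[e,\pi]$ in the absolute order on $S_k$).

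Next I would argue that this count is multiplicative over the cycles of $\pi$. In any minimal factorization $\pi = \tau_m \cdots \tau_1$ with $m = k - c(\pi)$, every transposition must strictly decrease the cycle count, i.e.\ merge two cycles of the current partial product; an easy inductive argument then shows that each $\tau_i$ has both elements lying in the same cycle of $\pi$. So if $\pi$ has cycles with supports $B_1,\dots,B_\ell$ of sizes $k_1,\dots,k_\ell$, any $\sigma$ on a geodesic from $e$ to $\pi$ factors as $\sigma = \sigma_1\cdots\sigma_\ell$ with $\sigma_j \in S_{B_j}$ on a geodesic from $e$ to $\pi_j$ inside $S_{B_j}$. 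Hence the total count is $\prod_j N(k_j)$, where $N(k)$ denotes the count for a single $k$-cycle.

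For the single-cycle case $N(k) = C_k$ I would invoke the classical bijection (due to Biane, see~\cite{nica06}) between $[e,(1\,2\,\dots\,k)]$ and the lattice $NC(k)$ of non-crossing partitions of $\{1,\dots,k\}$: to $\sigma$ one associates the set partition whose blocks are the cycles of $\sigma$, and the geodesic condition is exactly equivalent to this partition being non-crossing when the cycle $(1\,2\,\dots\,k)$ is drawn around a circle. Since $|NC(k)| = C_k$, this gives $N(k) = C_k$.

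Finally, to pass from $\prod_j C_{k_j}$ to $C_k$, I would use $C_{a+b} \ge C_a C_b$, which follows from the fact that concatenating a Dyck path of semi-length $a$ with one of semi-length $b$ yields a distinct Dyck path of semi-length $a+b$; iterating over the cycle sizes gives $\prod_j C_{k_j} \le C_k$, completing the bound. The only genuinely non-trivial ingredient is the Biane bijection, which is standard, so the proof is essentially bookkeeping once the geodesic interpretation is in hand; the reduction to a single cycle via intra-cycle transpositions is the main structural step.
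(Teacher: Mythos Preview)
Your proposal is correct and follows essentially the same route as the paper's proof: the geodesic interpretation via $(\ref{eq:triangleineq})$, the reduction to single cycles (the paper phrases this as ``each transposition splits a cycle of $\pi$'' rather than your intra-cycle factorization, but the content is identical), Biane's bijection with $NC(k)$ for the single-cycle count, and the inequality $C_a C_b \le C_{a+b}$ to combine. If anything, your justification of the multiplicativity step is slightly more explicit than the paper's one-line assertion.
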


\begin{proof}
The number of permutations $\sigma \in S_k$ such that $c(\pi^{-1} \sigma) + c(\sigma) = k+c(\pi)$ is equal to
\be \label{eq:critbound} |\{\sigma: d(\pi,\sigma) + d(\sigma,e) = d(\pi,e) \}|, \ee
or in other words the same as the number of permutations which lie on a shortest path (``geodesic'') between $\pi$ and $e$. 
Any such permutation must be obtained from $\pi$ by a sequence of transpositions, each of which splits a cycle of $\pi$ in two.
Permutations on the shortest path between a cycle of length $\ell$ and the identity are known to be isomorphic to non-crossing partitions of $\{1,\dots,\ell\}$~\cite{biane97}, which are counted by the Catalan numbers $C_\ell$. 
Therefore, an overall upper bound on (\ref{eq:critbound}) is
\[ \prod_{i=1}^{c(\pi)} C_{c_i} \le C_k, \]
where $c_i$ is the length of the $i$'th cycle of $\pi$ and we use the simple upper bound $C_a C_b \le C_{a+b}$, valid for all integers $a,b \ge 1$.
\end{proof}

For the more complicated case of $\delta>0$, we will rely on a result of Adrianov~\cite{adrianov97}. This work enumerates the so-called bicoloured unicellular maps with $m$ white and $n$ black vertices and $k$ edges, one of which is marked; this is known to be equivalent to the problem of counting pairs of permutations $\pi,\sigma \in S_k$ such that $c(\pi)=m$, $c(\sigma)=n$ and $\pi \sigma$ is a fixed cycle of length $k$~\cite{adrianov97}. For $\pi \in S_k$, write
\[ N_g(\pi) := |\{\sigma: c(\pi \sigma^{-1}) + c(\sigma) = k + c(\pi) - 2g\}| = |\{\sigma: d(\pi,\sigma)+ d(\sigma,e) = d(\pi,e) + 2g\}|. \]
By Lemma \ref{lem:critical} and the above arguments, $N_g(\pi) = 0$ for all $g<0$ and for non-integer $g$. Also set $B_g(k) := N_g(\kappa)$, where $\kappa \in S_k$ is an arbitrary cycle of length $k$. Then the following result holds.

\begin{thm}[Adrianov~\cite{adrianov97}, Corollary 3]
\label{thm:recur}
$B_g(k)$ satisfies the recurrence
\[ (k+1) B_g(k) = 2(2k-1)B_g(k-1) + (k-2)(k-1)^2 B_{g-1}(k-2) \]
with initial conditions
\[ B_0(1) = 1, \; B_0(2) = 2,\; B_g(k) = 0 \text{ for } g<0. \]
\end{thm}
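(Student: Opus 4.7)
The plan is to prove the recurrence by working in the map picture stated just before the theorem: $B_g(k)$ enumerates bicoloured unicellular maps on a genus-$g$ surface with $k$ edges and one marked edge. The Euler formula $V-E+F=2-2g$ with $F=1$ and $E=k$ gives $V=k+1-2g$, matching the cycle-count condition $c(\pi)+c(\sigma)=k+1-2g$. I would fix a canonical root (say the half-edge labelled $1$ around the white vertex containing it) so that rooted bicoloured unicellular maps of genus $g$ with $k$ edges are in bijection with the permutations $\sigma$ counted by $B_g(k)$, and derive the recurrence by analysing the local structure at the root.

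The shape of the recurrence --- one term with index shifted by $1$ and same genus, one term with index shifted by $2$ and genus dropped by $1$ --- is the classical signature of a Harer--Zagier-style root-edge deletion argument. Two cases should appear when the rooted edge is deleted: (i) the edge is ``trivial'' in the sense that a local deletion-plus-contraction yields a bicoloured unicellular map of the same genus on $k-1$ edges, contributing the $B_g(k-1)$ term; (ii) the edge lies on a non-trivial handle, in which case a cut-and-paste surgery returns a genus-$(g-1)$ map on $k-2$ edges, contributing the $B_{g-1}(k-2)$ term. To recover the coefficients I would carefully enumerate the inverse operations: in case (i) one must choose where to reinsert a leaf-like edge and an orientation, producing a factor roughly linear in $k$; in case (ii) one must choose two re-gluing corners and a twisting orientation, producing a factor cubic in $k$. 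The $(k+1)$ factor on the left-hand side most plausibly arises from refining the marking to one of the $k+1$ vertices of the map (in the maximum-genus stratum $g=0$ there are exactly $k+1$ vertices, with $2g$ vertices merged in general).

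The main obstacle I expect is pinning down the precise form $(k-2)(k-1)^2$ of the genus-reducing coefficient: its asymmetry, as opposed to a symmetric cubic in $k$, indicates that the three choices involved in inverting the handle surgery are not interchangeable, and the bipartite constraint that every edge connects a black and a white vertex adds further bookkeeping. If the direct bijective approach becomes unwieldy I would fall back on Frobenius' formula, writing $\sum_g B_g(k)\,N^{k+1-2g}$ as a character sum over $S_k$ evaluated at a $k$-cycle, identifying it (after suitable manipulations) with a Ginibre-type matrix integral proportional to $\E\,\tr(GG^\dag)^k$ for a complex Gaussian matrix $G$, and deriving the recurrence from the resulting Schwinger--Dyson / integration-by-parts identity, in direct analogy with the original Harer--Zagier derivation. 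As a sanity check before committing to either approach I would verify the recurrence by hand for small $k$ using the initial conditions $B_0(1)=1$, $B_0(2)=2$ together with an explicit enumeration of factorisations of a $3$-cycle to compute $B_0(3)$, and cross-check against the Catalan bound from Lemma~\ref{lem:critical}.
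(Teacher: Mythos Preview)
The paper does not prove this statement at all: Theorem~\ref{thm:recur} is quoted verbatim from Adrianov's paper and used as a black box in the subsequent induction for Lemma~\ref{lem:cycgenub}. So there is nothing to compare your attempt against on the paper's side.

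As for the attempt itself, it is a plan rather than a proof, and while the overall direction (root-edge analysis on bicoloured unicellular maps, or alternatively a Schwinger--Dyson identity for a Gaussian matrix integral) is exactly the kind of machinery that produces Harer--Zagier-type recurrences, several of your heuristic justifications are off. Most notably, your suggestion that the left-hand factor $(k+1)$ comes from ``refining the marking to one of the $k+1$ vertices'' cannot be right: a genus-$g$ unicellular map with $k$ edges has $k+1-2g$ vertices, not $k+1$, so this count only works at $g=0$. In the standard derivations the factor $k+1$ instead arises from marking one of the $2k$ half-edges plus one extra distinguished corner of the unique face (equivalently, from the derivative in the generating-function proof), and the split into the two right-hand terms comes from whether the marked edge bounds the face on both sides in a planar or a handle-creating way. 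You also acknowledge not having pinned down the asymmetric cubic $(k-2)(k-1)^2$; without that, the argument is incomplete. Your fallback via the character/Frobenius formula and a Gaussian integral is the more robust route and is essentially how Adrianov-type recurrences are usually obtained, but again none of the actual computation is carried out here.
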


Using Theorem \ref{thm:recur}, we now complete the proof of Lemma \ref{lem:cycgenub}. We have
\[ B_g(k) \le 4 B_g(k-1) + (k-1)^2 B_{g-1}(k-2), \]
and we now apply induction on $g$ to upper bound this quantity. For $g=0$, we have $B_0(k) \le 4^{k-1}$ (agreeing with the known exact values $B_0(k) = C_k$). For any $g>0$ we have $B_g(1) = 0$, and more generally
\beas
B_g(k) &\le& 4 \left(4 B_g(k-2) + (k-2)^2 B_{g-1}(k-3) \right) + (k-1)^2 B_{g-1}(k-2)\\
&\le& \dots \le \sum_{i=1}^{k-1} 4^{k-i-1} i^2 B_{g-1}(i-1).
\eeas
We now use the inductive hypothesis that $B_{g-1}(k) \le 4^{k-1} k^{3(g-1)}$ for all $k$, which implies that
\beas
B_g(k) &\le& \sum_{i=1}^{k-1} 4^{k-i-1} i^2 4^{i-2} (i-1)^{3(g-1)} \le 4^{k-3} \sum_{i=1}^{k-1} i^{3g-1}\\
&\le& 4^{k-3} \int_{1}^{k} x^{3g-1} dx \le 4^{k-3} \frac{k^{3g}}{3g} \le 4^{k-1} k^{3g}.
\eeas
We use this to show that, for any $k\ge 2$, any $\pi \in S_k$ and any permutation $\kappa \in S_k$ which consists of one cycle of length $k$,
\be \label{eq:sumineq} \sum_{h = 0}^g N_h(\pi) \le \sum_{h = 0}^g N_h(\kappa) \le 4^{k-1} k^{3g+1}. \ee
The argument for the first inequality is as follows. Let $\kappa'$ be a cycle of length $k$ which minimises $d(\kappa',\pi)$. Then, for any permutation $\sigma$ such that $d(\pi,\sigma)+ d(\sigma,e) = d(\pi,e) + 2g$, it holds that
\[ d(\kappa',\sigma) + d(\sigma,e) \le d(\kappa',\pi) + d(\pi,\sigma) + d(\sigma,e) = d(\kappa',\pi) + d(\pi,e) + 2g = d(\kappa',e) + 2g, \]
where the second equality holds because $\pi$ is on the shortest path between $\kappa'$ and $e$. Thus $\sigma$ contributes to $N_h(\kappa')$ for some $h \le g$. Just looking at one term in the sum on the left-hand side of (\ref{eq:sumineq}), we have that $N_g(\pi) \le  4^{k-1} k^{3g+1}$ for all $\pi \in S_k$. To complete the proof of Lemma \ref{lem:cycgenub}, simply observe that
\[ |\{\sigma: c(\pi^{-1}\sigma) + c(\sigma) = k + c(\pi) - \delta\}| = N_{\delta/2}(\pi) \]
for even $\delta$, and $|\{\sigma: c(\pi^{-1}\sigma) + c(\sigma) = k + c(\pi) -\delta\}| = 0$ for odd $\delta$.


\subsection{Proof of Lemma \ref{lem:alphabound}: permutations and Weingarten functions}
\label{sec:weingarten}

Let $A$ be the symmetric matrix defined by $A_{\pi \sigma} = d^{c(\pi^{-1}\sigma)-k}$, for $\pi,\sigma \in S_k$. Given some matrix $M$ such that $M = \sum_{\pi \in S_k} \alpha_{\pi} D_d(\pi)$, $A$ determines the coefficients $\alpha_{\pi}$ as follows:
\[ \tr M D_d(\sigma) = \sum_{\pi \in S_k} \alpha_{\pi} d^{c(\pi \sigma)} = d^k \sum_{\pi \in S_k} A_{\sigma^{-1}\pi} \alpha_{\pi}. \]
Letting $v$ and $w$ be the vectors defined by
\[ v_{\sigma} = \frac{1}{d^k} \tr M D_d(\sigma^{-1}),\;w_{\pi} = \alpha_{\pi}, \]
this is equivalent to the claim that $Aw = v$. Thus, if $A^{-1}$ exists, we can determine the $\alpha_{\pi}$ coefficients by computing $A^{-1}v$. Note that $A$ is approximately equal to the identity when $d$ is large with respect to $k$, as its off-diagonal entries rapidly decay~\cite{harrow12}.

In order to evaluate the entries of $A^{-1}$, we define the {\em Weingarten function}~\cite{collins06}
\[ \operatorname{Wg}(\pi) := \frac{1}{(k!)^2} \sum_{\lambda \vdash k} \frac{(f^{\lambda})^2}{s_\lambda(1^{\times d})}\chi^{\lambda}(\pi). \]
This expression uses standard notation from the representation theory of the symmetric group (see~\cite{audenaert06} for an accessible introduction). The sum is over partitions $\lambda$ of $\{1,\dots,k\}$; $f^\lambda$ is the number of standard Young tableaux with shape $\lambda$; $\chi^\lambda(\pi)$ is the character of the symmetric group corresponding to partition $\lambda$, evaluated at $\pi$; and finally $s_{\lambda}(1^{\times d})$ is the Schur function corresponding to the partition $\lambda$, which has the explicit expression
\[ s_{\lambda}(1^{\times d}) = \frac{f^\lambda}{k!} \prod_{(i,j) \in \lambda} (d+j-i). \]
Then we have the following lemma, which is well-known but we prove for completeness in Appendix \ref{sec:inverse}.
\begin{lem}
\label{lem:ainv}
\[ A^{-1}_{\pi \sigma} = d^k \operatorname{Wg}(\pi^{-1}\sigma). \]
\end{lem}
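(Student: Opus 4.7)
The plan is to invert $A$ by Fourier analysis on the symmetric group. Setting $G := d^k A$, so that $G_{\pi\sigma} = d^{c(\pi^{-1}\sigma)} = \tr D_d(\pi^{-1}\sigma)$, the claim is equivalent to showing $(G^{-1})_{\pi\sigma} = \operatorname{Wg}(\pi^{-1}\sigma)$. Since $G_{\pi\sigma}$ depends only on $\pi^{-1}\sigma$, the matrix $G$ acts by convolution on $\mathbb{C}[S_k]$, so its inverse must also be translation-invariant, $H_{\pi\sigma} = h(\pi^{-1}\sigma)$ for some class function $h$; the task reduces to identifying $h$.

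First I would expand $G$ in the irreducible-character basis of $S_k$. Schur--Weyl duality applied to the natural permutation action $D_d$ on $(\C^d)^{\otimes k}$ decomposes the space as $\bigoplus_{\lambda \vdash k,\,\ell(\lambda) \le d} V_\lambda^{GL(d)} \otimes W_\lambda^{S_k}$ with $\dim V_\lambda^{GL(d)} = s_\lambda(1^{\times d})$, giving the character formula
\[ \tr D_d(\pi) = \sum_{\lambda \vdash k} s_\lambda(1^{\times d})\, \chi^\lambda(\pi). \]
Substituting yields the Fourier expansion $G_{\pi\sigma} = \sum_{\lambda \vdash k} s_\lambda(1^{\times d})\, \chi^\lambda(\pi^{-1}\sigma)$.

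Next I would posit an inverse of the same shape, $H_{\pi\sigma} = \sum_\lambda a_\lambda \chi^\lambda(\pi^{-1}\sigma)$, and compute $GH$ via the standard convolution identity for irreducible characters of a finite group,
\[ \sum_{\tau \in S_k} \chi^\lambda(\pi^{-1}\tau)\, \chi^\mu(\tau^{-1}\sigma) = \delta_{\lambda\mu}\, \frac{k!}{f^\lambda}\, \chi^\lambda(\pi^{-1}\sigma). \]
This collapses the double sum to $(GH)_{\pi\sigma} = \sum_\lambda s_\lambda(1^{\times d})\, a_\lambda\, (k!/f^\lambda)\, \chi^\lambda(\pi^{-1}\sigma)$. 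Matching this against the Plancherel expansion of the identity, $\delta_{\pi,\sigma} = (1/k!)\sum_\lambda f^\lambda \chi^\lambda(\pi^{-1}\sigma)$, and invoking linear independence of the irreducible characters, forces $a_\lambda = (f^\lambda)^2 / \bigl((k!)^2\, s_\lambda(1^{\times d})\bigr)$, which is exactly the coefficient appearing in the definition of $\operatorname{Wg}$. Therefore $H_{\pi\sigma} = \operatorname{Wg}(\pi^{-1}\sigma)$, giving $A^{-1}_{\pi\sigma} = d^k \operatorname{Wg}(\pi^{-1}\sigma)$ as claimed.

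I expect no substantial obstacle beyond bookkeeping: one should check that each $s_\lambda(1^{\times d})$ in the Weingarten sum is strictly positive (so that $A$ is actually invertible rather than the Weingarten formula producing a pseudo-inverse), but this is automatic in the regime of this paper, since $d = d_A d_B$ is vastly larger than $k$ and the Schur--Weyl sum is supported on partitions with at most $d$ rows. The only other routine point is to use that characters of $S_k$ are real-valued, so $\chi^\lambda(\pi^{-1}\sigma) = \chi^\lambda(\sigma\pi^{-1})$ and the orientation of the convolution is immaterial.
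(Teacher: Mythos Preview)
Your proposal is correct and is essentially the same argument as the paper's, using the character expansion $d^{c(\pi)} = \sum_\lambda s_\lambda(1^{\times d})\chi^\lambda(\pi)$, character orthogonality (your convolution identity), and the Plancherel formula $\delta_{\pi\sigma} = \tfrac{1}{k!}\sum_\lambda f^\lambda \chi^\lambda(\pi^{-1}\sigma)$. The only cosmetic difference is that the paper verifies the claimed inverse by directly computing $(A^{-1}A)_{\pi\sigma}$, whereas you derive the inverse by solving for the Fourier coefficients; the underlying computation is identical.
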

For small $k$, one can use this expression to calculate the coefficients $\alpha_{\pi}$ exactly, but as $k$ increases the Weingarten function becomes cumbersome to work with. In order to address this, we now give a general upper bound on this function.

\begin{lem}
\label{lem:wgupper}
For any $k \le d^{2/3}$,
\[ |\operatorname{Wg}(\pi)| \le \frac{3\,C_{k-1}}{2}\,d^{c(\pi)-2k}, \]
where $C_{k-1}$ is the $(k-1)$'th Catalan number.
\end{lem}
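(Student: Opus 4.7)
The plan is to use the Jucys--Murphy expansion of the Weingarten function, turning the bound into a combinatorial estimate on monotone Hurwitz numbers. The Jucys--Murphy elements $J_i := \sum_{j<i}(j,i) \in \C[S_k]$ commute pairwise and satisfy $\prod_{i=1}^{k}(z+J_i) = \sum_{\sigma \in S_k} z^{c(\sigma)} \sigma$ in the group algebra. Setting $z=d$ and inverting, together with Lemma~\ref{lem:ainv} (which identifies the group-algebra inverse of $\sum_\sigma d^{c(\sigma)-k}\sigma$ with $d^k \sum_\sigma \operatorname{Wg}(\sigma) \sigma$), yields the formal expansion
\[
\operatorname{Wg}(\pi,d) \;=\; [\pi]\prod_{i=1}^{k}(d+J_i)^{-1} \;=\; d^{-k}\sum_{n\ge 0} \frac{(-1)^n}{d^n}\,[\pi]\,h_n(J_1,\ldots,J_k),
\]
where $h_n$ is the degree-$n$ complete homogeneous symmetric polynomial and $[\pi]$ denotes the coefficient of $\pi$ in $\C[S_k]$; the expansion is valid whenever $d$ exceeds the (known) operator norm of each $J_i$, which is ensured by $k \le d^{2/3}$.

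The key combinatorial input is that $b_n(\pi) := [\pi]\,h_n(J_1,\ldots,J_k)$ counts \emph{monotone factorizations} of $\pi$ into $n$ transpositions---tuples $(a_1,b_1),\ldots,(a_n,b_n)$ with $a_\ell < b_\ell$, $b_1\le b_2\le\cdots\le b_n$, and $(a_n,b_n)\cdots(a_1,b_1) = \pi$. Because any transposition factorization of $\pi$ has length at least $k-c(\pi)$, and parity forces $n \equiv k-c(\pi) \pmod{2}$, only the terms with $n = k-c(\pi)+2m$ for $m\ge 0$ contribute, and the triangle inequality gives
\[
|\operatorname{Wg}(\pi,d)| \;\le\; d^{c(\pi)-2k}\sum_{m\ge 0} b_{k-c(\pi)+2m}(\pi)\,d^{-2m}.
\]
The $m=0$ term is the minimum-length monotone Hurwitz number. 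For a single cycle of length $\ell$, these factorizations are in bijection with non-crossing partitions of $[\ell]$ (so there are $C_{\ell-1}$ of them), and the count factors multiplicatively over disjoint cycles, yielding $b_{k-c(\pi)}(\pi) = \prod_i C_{|c_i|-1}$. Iterating $C_a C_b \le C_{a+b}$ exactly as in Lemma~\ref{lem:critical} then gives $b_{k-c(\pi)}(\pi) \le C_{k-1}$.

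The technical crux is bounding the tail $m\ge 1$. The plan is to establish a growth estimate of the form $b_{k-c(\pi)+2m}(\pi) \le C_{k-1}\,(k^3/3)^m$, via a combinatorial injection that encodes each extra pair of transpositions as a ``handle'' attached to a minimum-length factorization: one inserts a new transposition $(a,b)$ and a matching compensating transposition at monotonicity-respecting positions, and a crude count bounds the number of insertions by $O(k^3)$. Under the hypothesis $k \le d^{2/3}$ we have $k^3/d^2 \le 1$, so $\sum_{m\ge 0}(k^3/(3d^2))^m \le 3/2$, and combining with the leading bound yields $|\operatorname{Wg}(\pi,d)| \le \tfrac{3}{2}\,C_{k-1}\,d^{c(\pi)-2k}$. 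Proving this tail estimate is the main obstacle, since a naive count such as $b_n(\pi) \le \binom{k}{2}^n$ gives a divergent series; the argument must genuinely exploit both the monotonicity constraint (which is what makes the Weingarten expansion well-behaved in $1/d$) and the non-crossing/Catalan structure already used for the $m=0$ term.
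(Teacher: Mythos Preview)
Your setup is correct and in fact coincides with the paper's: the Jucys--Murphy expansion you write is exactly Theorem~\ref{thm:wgprimitive} (your $b_n(\pi)$ is the paper's $w_n(\pi)$), and the parity/triangle-inequality step leading to
\[
|\operatorname{Wg}(\pi)| \le d^{c(\pi)-2k}\sum_{m\ge 0} b_{k-c(\pi)+2m}(\pi)\,d^{-2m}
\]
is the same. Your treatment of the $m=0$ term via non-crossing partitions and $C_aC_b\le C_{a+b}$ is also fine.

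The gap is precisely where you flag it: the tail estimate $b_{k-c(\pi)+2m}(\pi)\le C_{k-1}(k^3/3)^m$ is asserted as a plan, not proved. The ``handle'' injection you sketch is not obviously well-defined---to get an \emph{upper} bound on $b_{n+2}$ you would need a map from length-$(n+2)$ monotone factorizations \emph{into} length-$n$ ones with controlled fibre size, i.e.\ a deletion procedure that preserves monotonicity and whose inverse has at most $O(k^3)$ choices. Such a deletion is delicate: removing an ``extra'' pair of transpositions from a monotone word need not leave a monotone word, and identifying which pair to remove is exactly the hard part. You would effectively be reproving quantitative control on monotone Hurwitz numbers from scratch.

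The paper avoids this entirely with a two-line trick. First, for any $\pi$ there is a $k$-cycle $\kappa_\pi$ reachable from $\pi$ by $c(\pi)-1$ primitive transpositions (all with second coordinate $k$), and appending these to any primitive factorization of $\pi$ gives an injection proving $b_{k-c(\pi)+2m}(\pi)\le b_{k-1+2m}(\kappa)$. Second, for a $k$-cycle one has the \emph{exact} closed form $\operatorname{Wg}(\kappa)=(-1)^{k+1}C_{k-1}\big/\prod_{i=0}^{k-1}(d^2-i^2)\cdot d^{-1}$ (Theorem~\ref{thm:wgcycle}), which sums the entire tail for free:
\[
\sum_{m\ge 0} b_{k-1+2m}(\kappa)\,d^{-2m}=d^{2k-1}|\operatorname{Wg}(\kappa)|=\frac{C_{k-1}}{\prod_{i=1}^{k-1}(1-i^2/d^2)}\le \frac{C_{k-1}}{1-k^3/(3d^2)}\le \tfrac{3}{2}C_{k-1}.
\]
Combining gives $|\operatorname{Wg}(\pi)|\le d^{c(\pi)-1}|\operatorname{Wg}(\kappa)|\le \tfrac{3}{2}C_{k-1}d^{c(\pi)-2k}$. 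So rather than bounding each $b_{k-c(\pi)+2m}(\pi)$ individually, the paper reduces to the single-cycle case and then invokes the known product formula; this is both simpler and gives exactly the constant $3/2$ you were targeting.
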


The reader familiar with the work \cite{collins06} may wonder why we do not use the seemingly tighter bound given there that
\[ \operatorname{Wg}(\pi) = d^{c(\pi)-2k}(\operatorname{Moeb}(\pi) + O(d^{-2})), \]
where $\operatorname{Moeb}(\pi)$ is a function which can be shown to be upper bounded by $C_{k-1}$; the reason is that the $O(d^{-2})$ term in this bound hides an unspecified dependence\footnote{I would like to thank Carlos Gonz\'alez Guill\'en for pointing this out.} on $k$. Also note that the very recent work~\cite{collins12} gives an elegant alternative upper bound on the Weingarten function, but this bound does not seem to suffice for us to obtain tight results.

To prove Lemma \ref{lem:wgupper}, we will use a result of Matsumoto and Novak~\cite{novak10,matsumoto10,matsumoto12} which states that the Weingarten function can be expanded in terms of primitive factorisations in the symmetric group. This will allow us to use combinatorial bounds on such factorisations to obtain corresponding bounds on the Weingarten function. A primitive factorisation of a permutation $\pi \in S_k$ is a sequence of transpositions $(s_1,t_1),\dots,(s_\ell,t_\ell)$ such that the product $(s_1,t_1)\dots(s_\ell,t_\ell) = \pi$ and $t_1 \le t_2 \le \dots \le t_\ell$. Define $w_{\ell}(\pi)$ to be the number of primitive factorisations of $\pi$ into $\ell$ transpositions.

\begin{thm}[\cite{novak10,matsumoto10,matsumoto12}]
\label{thm:wgprimitive}
For any $k \le d$, and any $\pi \in S_k$,
\[ \operatorname{Wg}(\pi) = \frac{1}{d^k} \sum_{\ell = 0}^{\infty} w_{\ell}(\pi) \left(\frac{-1}{d} \right)^{\ell}. \]
\end{thm}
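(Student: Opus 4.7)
The plan is to prove the identity through the Jucys--Murphy elements $J_i = \sum_{j<i} (j,i) \in \C[S_k]$, which are well known to pairwise commute and, more importantly, to generate (via symmetric functions) the centre of $\C[S_k]$. The key classical computation is that the central element $\prod_{i=1}^k (1+zJ_i)$ acts on the irreducible representation $V^\lambda$ by the scalar $\prod_{(i,j)\in \lambda} (1 + z(j-i))$, because the joint eigenvalues of $(J_1,\dots,J_k)$ on a Gelfand--Zetlin-type basis of $V^\lambda$ are the cell contents $j-i$ of $\lambda$.

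First I would expand $X(z) := \prod_{i=1}^k (1+zJ_i)^{-1}$ combinatorially, as a formal power series in $z$. Since the $J_i$'s commute, we may write $X(z) = \prod_i \sum_{m \ge 0} (-zJ_i)^m$. Expanding $J_i^m = \sum_{s_1,\dots,s_m < i} (s_1,i)\cdots(s_m,i)$ and collecting, one sees that every monomial contributing to the coefficient of $\sigma$ corresponds to a sequence of transpositions $(s_1,t_1),\dots,(s_\ell,t_\ell)$ with $t_1 \le \cdots \le t_\ell$ whose product is $\sigma$ --- exactly the primitive factorisations counted by $w_\ell(\sigma)$. This gives
\[ X(z) = \sum_{\sigma \in S_k} \Bigl( \sum_{\ell \ge 0} w_\ell(\sigma) (-z)^\ell \Bigr) \sigma. \]

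Second I would extract the coefficients of $X(1/d)$ from the representation-theoretic side. Because $X(z)$ is central, it acts on $V^\lambda$ as $\omega^\lambda_X := \prod_{(i,j)\in\lambda}(1+z(j-i))^{-1}$, and Fourier inversion on the centre of $\C[S_k]$ gives
\[ [X(z)]_\sigma = \frac{1}{k!} \sum_{\lambda \vdash k} f^\lambda\, \omega_X^\lambda\, \chi^\lambda(\sigma). \]
Setting $z = 1/d$, so that $\omega_X^\lambda = d^k/\prod_{(i,j)}(d+j-i)$, and using the stated identity $s_\lambda(1^{\times d}) = (f^\lambda/k!)\prod_{(i,j)}(d+j-i)$ to rewrite the denominator, the right-hand side collapses to $d^k \operatorname{Wg}(\sigma)$. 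Equating this with the combinatorial formula above yields the claimed identity.

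The main technical obstacle is convergence: the two expansions must agree not just formally but as honest scalars when $z=1/d$. This is where the hypothesis $k \le d$ enters --- it guarantees $|j-i|/d < 1$ for every cell of every partition $\lambda \vdash k$, so each $(1+(j-i)/d)^{-1}$ has a convergent geometric series; and a crude counting bound $w_\ell(\pi) \le \binom{\ell+k-1}{\ell}(k-1)^\ell$ (pick the non-decreasing sequence $t_1 \le \cdots \le t_\ell$, then each $s_i < t_i$) shows that $\sum_\ell w_\ell(\pi) d^{-\ell}$ converges absolutely under the same assumption. With that justification in place the two expansions agree term-by-term, proving the theorem.
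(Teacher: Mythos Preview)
The paper does not prove this statement itself; it is quoted as a result of Novak and Matsumoto (references \cite{novak10,matsumoto10,matsumoto12} in the paper) and used as a black box in the proof of Lemma~\ref{lem:wgupper}. So there is no ``paper's own proof'' to compare against.

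That said, your argument is correct, and it is essentially the argument that the cited references give. The identification of $\prod_i (1+zJ_i)^{-1}$ with the generating function for primitive (weakly monotone) factorisations is exactly the observation of Matsumoto--Novak, and the computation of the eigenvalue on $V^\lambda$ via the content formula is the classical Jucys result. Your Fourier inversion on the centre of $\C[S_k]$ and the match to the definition of $\operatorname{Wg}$ via $s_\lambda(1^{\times d}) = \frac{f^\lambda}{k!}\prod_{(i,j)\in\lambda}(d+j-i)$ are both correct.

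One small slip: the number of weakly increasing sequences $t_1\le\cdots\le t_\ell$ in $\{2,\dots,k\}$ is $\binom{\ell+k-2}{\ell}$ rather than $\binom{\ell+k-1}{\ell}$, but this does not affect the convergence argument (the resulting series is dominated by the binomial series $\sum_\ell \binom{\ell+k-1}{\ell}((k-1)/d)^\ell = (1-(k-1)/d)^{-k}$, which converges since $k\le d$ gives $(k-1)/d<1$). The convergence on the spectral side is also fine: for $\lambda\vdash k$ the contents satisfy $|j-i|\le k-1<d$, so each factor $(1+(j-i)/d)^{-1}$ is a convergent geometric series.
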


Note that this expansion implies that $w_{\ell}(\pi)$ depends only on the cycle type of $\pi$. Also observe that, by the same argument as used in Section \ref{sec:comb}, $w_{\ell}(\pi) = 0$ unless $\ell = k - c(\pi) + 2g$, for integer $g \ge 0$. In the case where $\pi$ is a cycle of length $k$ (called a $k$-cycle in what follows), $\operatorname{Wg}(\pi)$ can be evaluated explicitly.

\begin{thm}[\cite{collins03,matsumoto12}]
\label{thm:wgcycle}
Fix $k \le d$ and let $\kappa \in S_k$ be an arbitrary $k$-cycle. Then
\[ \operatorname{Wg}(\kappa) = \frac{(-1)^{k+1} C_{k-1}}{d(d^2-1^2) \dots (d^2 - (k-1)^2)}, \]
where $C_{k-1}$ is the $(k-1)$'th Catalan number.
\end{thm}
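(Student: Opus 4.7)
The plan is to evaluate $\operatorname{Wg}(\kappa)$ directly from the character sum defining it, using the special structure of characters on $k$-cycles. Starting from
\[
\operatorname{Wg}(\kappa) = \frac{1}{(k!)^2}\sum_{\lambda \vdash k} \frac{(f^\lambda)^2}{s_\lambda(1^{\times d})}\chi^\lambda(\kappa),
\]
the first key observation is that the Murnaghan--Nakayama rule forces $\chi^\lambda(\kappa) = 0$ unless $\lambda$ is a hook: a cycle of length $k$ admits only one nonzero rim-hook of size $k$, and this requires $\lambda$ itself to be a hook $\lambda_j := (k-j, 1^j)$ for some $j \in \{0,\dots,k-1\}$, with leg-length $j$. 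Consequently $\chi^{\lambda_j}(\kappa) = (-1)^j$, and the hook-length formula gives $f^{\lambda_j} = \binom{k-1}{j}$.

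Next I would apply the formula $s_\lambda(1^{\times d}) = (f^\lambda/k!)\prod_{(i,j)\in\lambda}(d+j-i)$ to $\lambda_j$. Its $k$ boxes have contents $j-i$ equal to the consecutive integers $-j, -j+1, \dots, k-j-1$, so
\[
s_{\lambda_j}(1^{\times d}) = \frac{1}{k!}\binom{k-1}{j}\frac{(d+k-j-1)!}{(d-j-1)!}.
\]
Substituting into the character expansion, one factor of $\binom{k-1}{j}$ cancels and leaves
\[
\operatorname{Wg}(\kappa) = \frac{1}{k!}\sum_{j=0}^{k-1}(-1)^j \binom{k-1}{j}\frac{(d-j-1)!}{(d+k-j-1)!}.
\]

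The remaining and most delicate step is to show this alternating sum equals $(-1)^{k+1}C_{k-1}(d-k)!/(d+k-1)!$. Clearing the common denominator $(d+k-1)!/(d-k)!$, the claim reduces to the polynomial identity
\[
\Phi(d) := \sum_{j=0}^{k-1}(-1)^j\binom{k-1}{j}\prod_{i=1}^{j}(d+k-i)\prod_{i=1}^{k-1-j}(d-j-i) = (-1)^{k+1}k!\,C_{k-1},
\]
i.e.\ $\Phi(d)$ is independent of $d$ even though each summand is a polynomial of degree $k-1$. I would prove this by recognising $\Phi(d)$ as a terminating, balanced ${}_3F_2$ at unit argument and applying the Pfaff--Saalsch\"utz summation, whence the Catalan number $C_{k-1}$ emerges as the appropriate ratio of Gamma factors. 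A more elementary alternative is to show $\Phi(d+1) = \Phi(d)$ by a telescoping rearrangement using $\binom{k-1}{j} = \binom{k-2}{j} + \binom{k-2}{j-1}$, and then pin down the constant by evaluating $\Phi$ at a convenient point (such as $d = k-1$, where most summands vanish and the remainder collapses to a standard Catalan representation).

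The main obstacle is precisely this hypergeometric evaluation: the reduction to hook partitions and the hook content calculation are mechanical once one knows where to look, but producing the clean denominator $d(d^2-1^2)\cdots(d^2-(k-1)^2)$ together with the Catalan coefficient requires a genuine combinatorial identity. If the direct route proves unwieldy, a cleaner path would be to derive a three-term recurrence in $k$ for $\operatorname{Wg}$ of a $k$-cycle at fixed $d$ from the monotone factorisation expansion of Theorem \ref{thm:wgprimitive}, and then verify that the claimed closed form satisfies this recurrence against the matching initial values $\operatorname{Wg}(e) = 1/d$ at $k=1$ and $\operatorname{Wg}((1\;2)) = -1/[d(d^2-1)]$ at $k=2$.
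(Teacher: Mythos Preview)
The paper does not prove Theorem~\ref{thm:wgcycle} at all: it is quoted as a known result from \cite{collins03,matsumoto12} and used as a black box in the proof of Lemma~\ref{lem:wgupper}. So there is no ``paper's own proof'' to compare against; your proposal is a self-contained derivation where the paper simply cites.

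Your argument is essentially correct and standard. The reduction to hook partitions via Murnaghan--Nakayama, the values $\chi^{\lambda_j}(\kappa)=(-1)^j$ and $f^{\lambda_j}=\binom{k-1}{j}$, and the content product for $s_{\lambda_j}(1^{\times d})$ are all right, and they do lead to
\[
\operatorname{Wg}(\kappa)=\frac{1}{k!}\sum_{j=0}^{k-1}(-1)^j\binom{k-1}{j}\frac{(d-j-1)!}{(d+k-j-1)!}.
\]
One small correction: you overstate the difficulty of the final summation. Computing the ratio of consecutive terms shows this is a terminating ${}_2F_1$ at unit argument, not a balanced ${}_3F_2$; the Chu--Vandermonde identity
\[
{}_2F_1(-n,b;c;1)=\frac{(c-b)_n}{(c)_n}
\]
with $n=k-1$, $b=-(d+k-1)$, $c=-(d-1)$ already gives
\[
\sum_{j=0}^{k-1}(-1)^j\binom{k-1}{j}\frac{(d-j-1)!}{(d+k-j-1)!}=(-1)^{k-1}\,\frac{(2k-2)!}{(k-1)!}\,\frac{(d-k)!}{(d+k-1)!}=(-1)^{k+1}k!\,C_{k-1}\,\frac{(d-k)!}{(d+k-1)!},
\]
and dividing by $k!$ and recognising $(d+k-1)!/(d-k)!=d(d^2-1)\cdots(d^2-(k-1)^2)$ finishes the job. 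Pfaff--Saalsch\"utz is not needed, and the telescoping/recurrence fallback you sketch is unnecessary.
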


\begin{proof}[Proof of Lemma \ref{lem:wgupper}]
We first show that, for arbitrary $\pi \in S_k$, and arbitrary integer $g\ge 0$,
\be
\label{eq:wcyc} w_{k-c(\pi)+2g}(\pi) \le w_{k-1+2g}(\kappa),
\ee
where $\kappa$ is an arbitrary $k$-cycle. For any permutation $\pi$, there exists a $k$-cycle $\kappa_{\pi}$ such that $\kappa_{\pi}$ can be obtained from $\pi$ using $d(\kappa_{\pi},\pi)$ primitive transpositions (simply apply the sequence of transpositions $(j_1,k)(j_2,k)\dots(j_{c(\pi)-1},k)$ to $\pi$, where $\{j_1,\dots,j_{c(\pi)-1}\}$ contains one element from each of the cycles of $\pi$, except the cycle containing $k$). Thus each distinct primitive factorisation of $\pi$ of length $a$ gives a distinct primitive factorisation of $\kappa_{\pi}$ of length exactly $a + c(\pi) - 1$, which proves inequality (\ref{eq:wcyc}). Hence we have
\beas
|\operatorname{Wg}(\pi)| &=& \frac{1}{d^k} \sum_{\ell=1}^{\infty} w_{\ell}(\pi) d^{-\ell}
= d^{c(\pi)-2k} \sum_{g=0}^{\infty} w_{k-c(\pi)+2g} d^{-2g}
\le d^{c(\pi)-2k} \sum_{g=0}^{\infty} w_{k-1+2g}(\kappa) d^{-2g}\\
&=& d^{c(\pi)-1}|\operatorname{Wg}(\kappa)|,
\eeas
where the inequality is (\ref{eq:wcyc}). Now we can upper bound
\[ d^{2k-1}|\operatorname{Wg}(\kappa)| = \frac{C_{k-1}}{(1-1^2/d^2) \dots (1 - (k-1)^2/d^2)} \le \frac{C_{k-1}}{1 - \frac{1}{d^2} \sum_{i=1}^{k-1} i^2} \le \frac{3\,C_{k-1}}{2}, \]
where we use $k^3 \le d^2$. This completes the proof of Lemma \ref{lem:wgupper}.
\end{proof}

Lemma \ref{lem:wgupper} gives that
\be \label{eq:ainvbound} |A^{-1}_{\pi \sigma}| \le \frac{3\,C_{k-1}}{2}\, d^{c(\pi^{-1}\sigma)-k} \ee
for all $\pi,\sigma \in S_k$. We now use this to obtain an upper bound on the coefficients occurring in the decomposition of the specific matrix $M^{(k)}$ in terms of permutations, and hence prove Lemma \ref{lem:alphabound}. In fact, we prove the following lemma, from which Lemma \ref{lem:alphabound} immediately follows by observing that $C_{k-1} \le 4^{k-1}/k$.

\begin{lem}
Let $M_0$ be the projector onto a fixed $r$-dimensional subspace of $\C^d$, let $k$ be an arbitrary integer satisfying $1 \le k \le (r/2)^{2/3}$, and set
\[ M^{(k)} = \E_U[U^{\otimes k} M_0^{\otimes k} (U^{\dag})^{\otimes k}]. \]
Write $M^{(k)} = \sum_{\pi \in S_k} \alpha_\pi D_d(\pi)$. Then
\[ |\alpha_\pi| \le 3\,C_{k-1} k^2 4^{k-1} \frac{r^{c(\pi)}}{d^k}. \]
\end{lem}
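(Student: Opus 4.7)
The plan is to realise $\alpha_\pi$ as the entries of $A^{-1}v$ for the linear system $Aw = v$ introduced at the start of Section~\ref{sec:weingarten}, and then bound the two factors separately using equation~(\ref{eq:ainvbound}) on one side and Lemma~\ref{lem:cycgenub} on the other. First I would compute the right-hand vector. Because $D_d(\sigma^{-1})$ commutes with every $U^{\otimes k}$, the unitary average in the definition of $M^{(k)}$ drops out of the trace, giving
\[
d^{k} v_\sigma \;=\; \tr\bigl[M^{(k)} D_d(\sigma^{-1})\bigr] \;=\; \tr\bigl[M_0^{\otimes k} D_d(\sigma^{-1})\bigr] \;=\; r^{c(\sigma)},
\]
where the last equality is the standard identity $\tr[X^{\otimes k}D_d(\pi)] = \prod_c \tr(X^{|c|})$ over cycles of $\pi$, together with $M_0^{\ell}=M_0$ and $\tr M_0 = r$. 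Thus $v_\sigma = r^{c(\sigma)}/d^k$.

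Feeding this into $\alpha_\pi = (A^{-1}v)_\pi$ and applying the Weingarten bound $|A^{-1}_{\pi\sigma}| \le \tfrac{3C_{k-1}}{2}\,d^{c(\pi^{-1}\sigma)-k}$ from (\ref{eq:ainvbound}) gives
\[
|\alpha_\pi| \;\le\; \frac{3\,C_{k-1}}{2\,d^{2k}} \sum_{\sigma \in S_k} d^{c(\pi^{-1}\sigma)}\, r^{c(\sigma)}.
\]
The remaining task is to control this combinatorial sum. After the change of variables $\tau := \pi^{-1}\sigma$ (so that $c(\sigma) = c(\pi\tau)$), I would use $r \le d$ and $c(\tau) \le k$ to rewrite each term uniformly as
\[
d^{c(\tau)}\, r^{c(\pi\tau)} \;=\; r^{c(\tau)+c(\pi\tau)}\bigl(d/r\bigr)^{c(\tau)} \;\le\; r^{k+c(\pi)-\delta(\tau)}\bigl(d/r\bigr)^{k} \;=\; d^{k}\, r^{c(\pi)}\, r^{-\delta(\tau)},
\]
where $\delta(\tau) := k + c(\pi) - c(\tau) - c(\pi\tau) \ge 0$ by the first part of Lemma~\ref{lem:cycgenub}.

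Grouping $\tau$ according to the value of $\delta$, the second part of Lemma~\ref{lem:cycgenub} supplies at most $4^{k-1} k^{3\delta/2+1}$ permutations per level, yielding
\[
\sum_{\sigma} d^{c(\pi^{-1}\sigma)}\, r^{c(\sigma)} \;\le\; d^{k}\, r^{c(\pi)}\, \cdot\, 4^{k-1}\,k \sum_{\delta \ge 0} \bigl(k^{3/2}/r\bigr)^{\delta}.
\]
The hypothesis $k \le (r/2)^{2/3}$ forces $k^{3/2}/r \le 1/2$, so the geometric series is bounded by a constant ($\le 2$), and folding all constants together yields the stated bound with room to spare (in fact a factor of $k$ sharper).

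The only genuinely delicate step is the uniform per-term bound in paragraph three: splitting naively according to whether $c(\pi\tau) \ge c(\pi)$ leaves the low-cycle terms bounded only by $d^k r^{c(\pi)}$, with no $r^{-\delta}$ decay, and the resulting sum over $\delta$ would then diverge under our mild hypothesis $r \ge 2k^{3/2}$. Absorbing the asymmetry into the trivial overestimate $(d/r)^{c(\tau)} \le (d/r)^k$ costs a uniform factor of $d^k/r^k$ per summand but restores the decaying geometric series — this is exactly where the assumption that $k$ is small compared to $r^{2/3}$ is used. Once this uniform bound is in place, the rest is routine bookkeeping.
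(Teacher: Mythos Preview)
Your proof is correct and uses the same ingredients as the paper's (the expression $\alpha_\pi=(A^{-1}v)_\pi$, the Weingarten bound~(\ref{eq:ainvbound}), and the counting from Lemma~\ref{lem:cycgenub}), so there is no genuine gap.

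There is one organisational difference worth mentioning. The paper keeps both indices $i=c(\pi^{-1}\sigma)$ and $j=c(\sigma)$ separate, performs a nested pair of geometric sums in $(d/k^{3/2})^i$ and $(r/k^{3/2})^j$, and then bounds the outer sum $\sum_{i=1}^k (d/r)^i$ crudely by $k(d/r)^k$; this is where the extra factor of $k$ in the stated $k^2$ comes from. You instead use the pointwise overestimate $(d/r)^{c(\tau)}\le(d/r)^k$ up front, which collapses everything to a single index $\delta=k+c(\pi)-c(\tau)-c(\pi\tau)$ and a single geometric series $\sum_{\delta\ge0}(k^{3/2}/r)^\delta\le2$. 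The result is a bound of $3\,C_{k-1}\,k\,4^{k-1}r^{c(\pi)}/d^k$, indeed one power of $k$ sharper than the paper's, and the argument is shorter. The cost is that your per-term bound throws away the dependence on $c(\tau)$ immediately, whereas the paper's double sum retains more information along the way; but since both routes end with the same crude $(d/r)^k$ factor, nothing is lost.
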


\begin{proof}
As $\tr[M^{(k)} D_d(\pi^{-1})] = r^{c(\pi)}$ for all $\pi \in S_k$, by (\ref{eq:ainvbound}) we have
\[ \alpha_\pi = \frac{1}{d^k} \sum_{\sigma \in S_k} A^{-1}_{\pi \sigma} r^{c(\sigma)} \le \frac{3\,C_{k-1}}{2} \sum_{\sigma \in S_k} d^{c(\pi^{-1}\sigma)-2k} r^{c(\sigma)}. \]
Rewriting the sum and using Lemma \ref{lem:cycgenub}, we obtain
\beas |\alpha_\pi| &\le& \frac{3\,C_{k-1}}{2}\, d^{-2k} \sum_{i=1}^k d^i \sum_{\sigma,c(\pi^{-1}\sigma)=i} r^{c(\sigma)}\\
&=& \frac{3\,C_{k-1}}{2}\, d^{-2k} \sum_{i=1}^k d^i \sum_j r^j |\{\sigma:c(\pi^{-1}\sigma)=i,c(\sigma)=j\}|\\
&\le& \frac{3\,C_{k-1}}{2}\, d^{-2k} \sum_{i=1}^k d^i \sum_{j=1}^{k+c(\pi)-i} r^j 4^{k-1} k^{(3/2)(k + c(\pi) - (i+j) )+1}\\
&=& 3\,C_{k-1} d^{-2k} 2^{2k-3} k^{(3/2)(k + c(\pi))+1} \sum_{i=1}^k \left(\frac{d}{k^{3/2}}\right)^{\!i} \sum_{j=1}^{k+c(\pi)-i} \left(\frac{r}{k^{3/2}}\right)^j\\
&\le& 3\,C_{k-1} d^{-2k} 4^{k-1} k^{(3/2)(k + c(\pi))+1} \sum_{i=1}^k \left(\frac{d}{k^{3/2}}\right)^{\!i} \left(\frac{r}{k^{3/2}}\right)^{k+c(\pi)-i}\\
&=& 3\,C_{k-1} d^{-2k} k 4^{k-1} r^{k+c(\pi)} \sum_{i=1}^k \left(\frac{d}{r}\right)^i\\
&\le& 3\,C_{k-1} d^{-2k} k^2 4^{k-1} r^{k+c(\pi)} \left(\frac{d}{r}\right)^k\\
&=& 3\,C_{k-1} d^{-k} k^2 4^{k-1} r^{c(\pi)}.
\eeas
\end{proof}
Most of these steps are self-explanatory; in the third inequality we use the fact that $r/2 \ge k^{3/2}$, and in the fourth the fact that $r \le d$.


\section{Conclusion}
\label{sec:conclusion}

We have shown that random channels obey weak multiplicativity with high probability. When combined with the results of Christandl, Schuch and Winter~\cite{christandl09,christandl10} on the antisymmetric subspace, this implies that two of the constructions of channels which display the strongest known two-copy multiplicativity violations are in fact weakly multiplicative when the number of copies increases. This naturally leads one to conjecture that in fact {\em all} channels satisfy some form of weak multiplicativity (see~\cite{hastings09} for a similar conjecture). However, note that the proof strategy used here fails badly for the antisymmetric subspace, as
\[ P_{\operatorname{anti}}^{\Gamma} = \frac{1}{2} \left( I - F^\Gamma \right) = \frac{1}{2} \left( I - d\,\Phi \right), \]
where $F$ is the flip (swap) operator which interchanges two systems and $\Phi$ is a maximally entangled state. Hence we have $\|P_{\operatorname{anti}}^{\Gamma}\|_\infty = (d-1)/2$, which does not provide a useful upper bound on $h_{\operatorname{SEP}}(P_{\operatorname{anti}})$. It would be very interesting to find a single proof strategy which works in both cases.

It is also interesting to observe that our main result that random channels obey weak $p$-norm multiplicativity with exponent $(1/2-o(1))(1-1/p)$ becomes weaker as $p\rightarrow 1$, whereas known violations of multiplicativity are strongest at $p=\infty$. It is therefore tempting to conjecture that this exponent could be improved to $1/2-o(1)$ for all $p>1$, or even to $(1/2-o(1))f(p)$ for some function $f(p)$ which {\em decreases} with $p$.


\subsection*{Acknowledgements}

I would like to thank Graeme Mitchison for helpful discussions at an early stage of this work, and in particular for pointing out reference~\cite{audenaert06}. I would also like to thank Fernando Brand\~{a}o, Toby Cubitt, Carlos Gonz\'alez Guill\'en and Andreas Winter for helpful comments on a previous version, and Aram Harrow for sending me an early version of the work~\cite{harrow12}. Finally, I would like to thank two referees for their useful comments. This work was supported by an EPSRC Postdoctoral Research Fellowship.


\appendix

\section{Proof of Lemma \ref{lem:ainv}: Exact expression for \texorpdfstring{$A^{-1}$}{the inverse of A}}
\label{sec:inverse}

We finally prove Lemma \ref{lem:ainv}.

\begin{replem}{lem:ainv}
\[ A^{-1}_{\pi \sigma} = \frac{d^k}{(k!)^2} \sum_{\lambda \vdash k} \chi^{\lambda}(\pi^{-1} \sigma) \frac{(f^{\lambda})^2}{s_\lambda(1^{\times d})} = d^k \operatorname{Wg}(\pi^{-1} \sigma). \]
\end{replem}

\begin{proof}
We simply verify that
\beas
(A^{-1}A)_{\pi \sigma} &=& \sum_{\tau \in S_k} A^{-1}_{\pi \tau} A_{\tau \sigma}\\
&=& \frac{1}{(k!)^2} \sum_{\tau \in S_k} \left(\sum_{\lambda \vdash k} \chi^{\lambda}(\pi^{-1} \tau) \frac{(f^{\lambda})^2}{s_\lambda(1^{\times d})} \right) d^{c(\tau^{-1}\sigma)}\\
&=& \frac{1}{(k!)^2} \sum_{\lambda,\mu \vdash k} \frac{(f^{\lambda})^2}{s_\lambda(1^{\times d})} s_\mu(1^{\times d}) \sum_{\tau \in S_k} \chi^{\lambda}(\pi^{-1} \tau) \chi^{\mu}(\tau^{-1} \sigma) \\
&=& \frac{1}{k!} \sum_{\lambda,\mu \vdash k} \frac{f^{\lambda}}{s_\lambda(1^{\times d})} s_\mu(1^{\times d}) \chi^{\lambda}(\pi^{-1}\sigma) \delta_{\lambda \mu}\\
&=& \frac{1}{k!} \sum_{\lambda \vdash k} f^{\lambda} \chi^{\lambda}(\pi^{-1}\sigma) = \delta_{\pi \sigma}.
\eeas
The third equality above is the well-known relation $\sum_{\mu \vdash k} s_\mu(1^{\times d}) \chi^{\mu}(\pi) = d^{c(\pi)}$~\cite{audenaert06}, and the fourth and sixth are character orthogonality relations (observing that $f^\lambda = \chi^\lambda(e)$).
\end{proof}


\bibliographystyle{plain}
\bibliography{../thesis}

\end{document}